\pdfoutput=1
\documentclass[11pt]{article}
\usepackage{times}
\usepackage[letterpaper, left=1in, right=1in, top=1in, bottom=1in]{geometry}
\usepackage{mathtools}

\DeclareMathOperator*{\argmin}{argmin}

\newcommand{\field}[1]{\mathbb{#1}}

\newcommand{\E}{\field{E}}

\newcommand{\KL}{{\text{\rm KL}}}

\newcommand{\Reg}{\mathrm{Reg}}

\newcommand{\wh}{\widehat}
\newcommand{\wt}{\widetilde}

\newcommand{\R}{\mathbb{R}}
\newcommand{\Xcal}{\mathcal{X}}
\newcommand{\Acal}{\mathcal{A}}

\newcommand{\otil}{\ensuremath{\widetilde{\mathcal{O}}}}

\newcommand{\Dcal}{\mathcal{D}}
\newcommand{\Ecal}{\mathcal{E}}

\newcommand{\Gcal}{\mathcal{G}}

\newcommand{\Ocal}{\mathcal{O}}

\newcommand{\EE}{\mathbb{E}} 
\newcommand{\pa}[1]{\left(#1\right)}

\newcommand{\piref}{\pi_{\textrm{ref}}}

\newcommand{\Softmax}{\operatorname{Softmax}}

\DeclareMathOperator{\supp}{supp}

\usepackage{microtype}
\usepackage{graphicx}
\usepackage{subfigure}
\usepackage{booktabs} 

\usepackage{enumitem}

\usepackage{hyperref}

\usepackage{amsmath}
\usepackage{amssymb}
\usepackage{mathtools}
\usepackage{amsthm}
\usepackage{mathrsfs} 
\usepackage{algorithm}
\usepackage{algorithmic}
\usepackage{thmtools,thm-restate}
\allowdisplaybreaks

\usepackage[capitalize,noabbrev]{cleveref}
\theoremstyle{plain}
\newtheorem{theorem}{Theorem}

\newtheorem{lemma}[theorem]{Lemma}
\newtheorem{corollary}[theorem]{Corollary}

\theoremstyle{definition}
\newtheorem{definition}{Definition}
\newtheorem{assumption}[definition]{Assumption}
\theoremstyle{remark}

\usepackage{natbib}
\date{}

\usepackage[textsize=tiny]{todonotes}

\usepackage{hyperref,url}
\hypersetup{
  colorlinks,
  citecolor=blue!70!black,
  linkcolor=blue!70!black,
  urlcolor=blue!70!black}
\usepackage{cleveref}
\crefformat{equation}{#2(#1)#3}
\Crefformat{equation}{#2(#1)#3}

\def\shownotes{1}  
\ifnum\shownotes=1
\newcommand{\authnote}[2]{{\scriptsize $\ll$\textsf{#1 notes: #2}$\gg$}}
\else
\newcommand{\authnote}[2]{}
\fi

\title{Beyond Pessimism: Offline Learning in KL-regularized Games}

\author{Yuheng Zhang\thanks{University of Illinois Urbana-Champaign. Email: \texttt{yuhengz2@illinois.edu, nanjiang@illinois.edu}.} \and Claire Chen\thanks{California Institute of Technology. Email: \texttt{clairechen@caltech.edu}.}  \and Nan Jiang\footnotemark[1]}

\renewcommand{\cite}{\citep}

\begin{document}
\maketitle

\begin{abstract}
We study offline learning in KL-regularized two-player zero-sum games, where policies are optimized with respect to a fixed reference policy through KL regularization.
Prior work relies on pessimistic value estimation to handle distribution shift, yielding only $\widetilde{\mathcal{O}}(1/\sqrt n)$ statistical rates.
We develop a new pessimism-free algorithm and analytical framework for KL-regularized games, built on the smoothness of KL-regularized best responses and a stability property of the Nash equilibrium induced by skew symmetry.
This yields, to our knowledge, the first pessimism-free offline learning guarantee for KL-regularized games, with a fast $\widetilde{\mathcal{O}}(1/n)$ sample complexity bound.
We further propose an efficient self-play policy optimization algorithm that replaces exact equilibrium computation with iterative KL-regularized policy updates, and prove that its last iterate preserves the same pessimism-free statistical guarantee up to a controlled optimization error.
\end{abstract}

\section{Introduction}
Offline reinforcement learning (RL) aims to learn decision-making policies from a fixed dataset, without any additional interaction with the underlying environment.
This paradigm is particularly compelling in settings where online exploration is costly or unsafe, such as healthcare \citep{komorowski2018artificial}, finance \citep{lee2023offline}, and autonomous driving \citep{shi2021offline}.
As a result, understanding the statistical limits and designing provably reliable offline learning algorithms have become central topics in data-driven decision-making.

A fundamental difficulty in offline learning is distribution shift: the policies being optimized may differ substantially from the behavior policy that generated the data. As a result, value or payoff estimates can become unreliable when evaluating actions that are rarely observed in the dataset. To address this issue, most existing approaches to offline RL and offline game learning rely on the principle of pessimism \citep{liu2020provably,jin2021pessimism,rashidinejad2021bridging,cui2022offline,zhang2023offline,ye2024online}. These methods modify the learning objective by introducing conservative penalties or uncertainty adjustments, so that actions outside the data distribution are assigned low values. While effective for controlling extrapolation error, such pessimistic designs typically introduce additional algorithmic complexity and require careful tuning in practice.

In this paper, we study offline learning in KL-regularized two-player zero-sum games. In this formulation, each player seeks to optimize its own objective while being constrained, via a KL divergence, to remain close to a fixed reference policy. This regularization induces a unique Nash equilibrium and provides a principled way to control deviations from a baseline. Crucially, such games serve as the theoretical backbone for Large Language Model (LLM) alignment, as seen in Nash learning from human feedback~\citep{munos2024nash}, where policies are optimized via pairwise comparisons under enforced KL regularization. Prior work on offline learning in KL-regularized games largely follows the pessimism-based paradigm developed for offline Markov games. In particular, \citet{ye2024online} applies pessimistic value estimation to KL-regularized zero-sum games, obtaining a $\widetilde{\mathcal{O}}(1/\sqrt{n})$ sample complexity bound under unilateral coverage. While this approach provides statistical guarantees, it does not fully exploit the geometric structure induced by KL regularization, notably the strong convexity and smoothness of the regularized game. This raises a natural question: 
\begin{center}
\textit{Can we achieve sharper statistical guarantees in KL-regularized games without pessimism?}
\end{center}
We provide an affirmative answer to this question. To the best of our knowledge, this is the first work to demonstrate that by properly exploiting the geometry of the problem, one can achieve a fast $\widetilde{\mathcal{O}}(1/n)$ rate using a completely pessimism-free algorithm. Our main contributions are:
\begin{itemize}\item \textbf{Pessimism-free Algorithm with Fast Rates.} We propose a direct minimax estimation algorithm (Algorithm~\ref{alg:minimax}) that performs equilibrium computation on the empirical game induced by a simple least-squares payoff estimator. We prove that under the standard unilateral coverage assumption, this approach achieves a $\widetilde{\mathcal{O}}(1/n)$ sample complexity, significantly improving upon the pessimism-based $\widetilde{\mathcal{O}}(1/\sqrt{n})$ rates established in prior work \citep{ye2024online}. This result provides the first proof that explicit pessimism is \textit{not} necessary for sample-efficient offline learning in this setting.

\item \textbf{Novel Geometric Analysis.} Our analysis fundamentally departs from the standard pessimism-based proofs. Instead of constructing lower confidence bounds, we leverage the stability of KL-regularized equilibria. By exploiting the smoothness of best responses and the skew-symmetry of the zero-sum game operator, we show that the duality gap is controlled solely by unilateral estimation errors. This property allows us to harness the fast generalization rates of least-squares estimation directly, without requiring pessimism or uniform policy coverage.

\item \textbf{Efficient Self-play Optimization.} To address the computational bottleneck of exact equilibrium computation, we propose an efficient self-play policy optimization algorithm (Algorithm~\ref{alg:self_play}). We analyze the coupled mirror descent-ascent dynamics and show that after $T$ iterations, the self-play algorithm matches the $\widetilde{\mathcal{O}}(1/n)$ statistical guarantee of the minimax estimator up to a vanishing $\mathcal{O}(1/\sqrt{T})$ optimization error, thereby providing both statistical guarantees and computational efficiency.
\end{itemize}

\section{Preliminary}\label{sec:prelim}
\paragraph{Notations.}
Let $\Xcal$ be the context space with $x \sim \rho$, where $\rho \in \Delta(\Xcal)$ is the context distribution.
Both players share a finite action space $\Acal$.
A policy for player $i\in\{1,2\}$ is denoted by $\pi_i:\Xcal \to \Delta(\Acal)$, which specifies the action distribution given the context.
A joint policy is written as $\pi=(\pi_1,\pi_2)$. We define the $\ell_1$-norm of a joint policy as the sum of the $\ell_1$-norms of its components, i.e., $\|\pi\|_1 \coloneqq \|\pi_1\|_1 + \|\pi_2\|_1$. For a distribution $\mu\in\Delta(\Acal)$, we write
$\supp(\mu):=\{a\in\Acal:\mu(a)>0\}$ for its support.


\paragraph{KL-regularized Game Objective.} We consider a two-player zero-sum game with an unknown payoff function
$g^\star:\Xcal \times \Acal \times \Acal \to [-1,1]$.
Given a reference policy $\piref$ and a regularization parameter $\eta$, we define the KL-regularized game objective as
\begin{align}
\textstyle    J(g^\star,\pi_1,\pi_2)
    := 
    \EE_{x \sim \rho}
    \EE_{\substack{a_1 \sim \pi_1(\cdot|x)\\ a_2 \sim \pi_2(\cdot|x)}}
    \bigg[
        g^\star(x,a_1,a_2)
        - \eta^{-1} \log \frac{\pi_1(a_1|x)}{\piref(a_1|x)}
        + \eta^{-1} \log \frac{\pi_2(a_2|x)}{\piref(a_2|x)}
    \bigg].
    \label{eq:kl-objective-game} \nonumber
\end{align}
Player~1 seeks to maximize this objective, while Player~2 seeks to minimize it. The KL regularization terms encourage both players to remain close to the reference policy $\piref$. Such KL-regularized games naturally arise in RLHF and alignment for large language models~\citep{ouyang2022training, munos2024nash}, where $g^\star$ can represent the win rate between two candidate responses.
\paragraph{Nash Equilibrium and Duality Gap.}
We restrict attention to the policy class $\Pi$ consisting of all policies supported on the same action set as the reference policy $\piref$.
For a payoff function $g$, we use the shorthand:
\[
J(g,\dagger,\pi_2)
~:=~
\max_{\pi_1\in\Pi} J(g,\pi_1,\pi_2),
\qquad
J(g,\pi_1,\dagger)
~:=~
\min_{\pi_2\in\Pi} J(g,\pi_1,\pi_2),
\]
to denote best response values. The Nash equilibrium (NE) of the KL-regularized game is defined as a saddle point:
\begin{align*}
(\pi_1^\star,\pi_2^\star)
~\in~
\arg\max_{\pi_1 \in \Pi}\;
\arg\min_{\pi_2 \in \Pi}
J(g^\star,\pi_1,\pi_2).
\end{align*}
Due to the strong convexity and concavity induced by the KL regularization, this equilibrium is unique~\citep{munos2024nash, ye2024online}. Moreover, $(\pi_1^\star,\pi_2^\star)$ satisfies the mutual best response conditions $J(g^\star,\pi_1^\star,\pi_2^\star)=J(g^\star,\dagger,\pi_2^\star)=J(g^\star,\pi_1^\star,\dagger)$. For any policy pair $\pi=(\pi_1,\pi_2)$, we quantify suboptimality via the duality gap:
\begin{align*}
\mathrm{DualGap}(\pi)
~:=~
J(g^\star,\dagger,\pi_2)
-
J(g^\star,\pi_1,\dagger).
\end{align*}
Note that $\mathrm{DualGap}(\pi)\ge 0$ for all $\pi$, and $\mathrm{DualGap}(\pi)=0$ if and only if $\pi$ is a Nash equilibrium.

\paragraph{Function Approximation.}
We consider a general function approximation setting with a function class
$\Gcal \subset (\Xcal \times \Acal \times \Acal \to [-1,1])$ for estimating the payoff function $g^\star$.
We make the following realizability assumption, which is standard in function approximation literature~\citep{xie2021batch,zhang2023offline}.

\begin{assumption}[Realizability]\label{asm:realize}
The function class $\Gcal$ is finite and contains the true payoff function, i.e., $g^\star \in \Gcal$.
\end{assumption}
The finite class assumption is made for notational simplicity and can be relaxed to standard complexity measures such as covering numbers.

\paragraph{Offline Dataset.}
In the offline setting, we do not interact with the environment and instead assume access to a pre-collected dataset
$\Dcal=\{(x_i,a_{i,1},a_{i,2},p_i)\}_{i=1}^n$.
Each context $x_i$ is drawn from $\rho$, and the action pair $(a_{i,1}, a_{i,2})$ is sampled from a joint behavior policy $\pi_b = (\pi_{b,1}, \pi_{b,2})$, i.e., 
$a_{i,1} \sim \pi_{b,1}(\cdot \mid x_i)$ and $a_{i,2} \sim \pi_{b,2}(\cdot \mid x_i)$. The observed feedback is
\[
p_i = g^\star(x_i,a_{i,1},a_{i,2}) + \epsilon_i,
\]
where $\epsilon_i$ is $1$-sub-Gaussian. For example, in LLM alignment, $x_i$ is a user prompt, $(a_{i,1},a_{i,2})$ are two candidate responses, and $p_i$ is a (noisy) human preference.

\paragraph{Concentrability.}
The coverage of the offline dataset $\Dcal$ plays a central role in determining what policies can be learned effectively. 
Since learning relies only on data collected under $\pi_b$, generalization requires that errors measured under $\rho \times \pi_b$ remain controlled on context-action tuples induced by other policies. 
To quantify this form of distribution mismatch, we adopt the $D^2$-divergence notion from prior work on offline learning~\citep{ye2024online,zhao2025towards}.
\begin{definition}
Given a function class $\mathcal{G} \subset (\Xcal \times \Acal \times \Acal \to [-1,1])$ and a behavior policy $\pi_b$, the $D^2$-divergence at a triple $(x,a_1,a_2)$ is defined as
$$
    D^2_{\mathcal{G}}((x,a_1,a_2);\pi_b):= \sup_{g,h \in \mathcal{G}}
    \frac{\big( g(x,a_1,a_2) - h(x,a_1,a_2) \big)^2}
    {\E_{(x',a'_1,a'_2) \sim \rho \times \pi_b}
    \big[ \big( g(x',a'_1,a'_2) - h(x',a'_1,a'_2) \big)^2 \big]}.
$$
\end{definition}
The $D^2$-divergence acts as a pointwise extrapolation factor: it quantifies how much
a discrepancy between two functions in $\mathcal{G}$, measured under the data-generating distribution $\rho \times \pi_b$, can be amplified at a particular context--action triple $(x,a_1,a_2)$. 
A small value of $D^2_{\mathcal{G}}((x,a_1,a_2);\pi_b)$ guarantees that if two functions are close under the data distribution, they must also be close at $(x,a_1,a_2)$.

With the $D^2$-divergence in hand, we are ready to state our coverage assumption on $\Dcal$.
Unlike offline learning in single-agent problems, where it suffices to assume coverage of the optimal policy, provably efficient offline learning in games requires \emph{unilateral coverage}: the dataset must cover not only the Nash policy pair $(\pi_1^\star,\pi_2^\star)$, but also all unilateral policy pairs obtained by allowing one player to deviate while the other plays its Nash strategy.

\begin{assumption}[Unilateral Concentrability]\label{assm:uni_cover}
There exists a constant $C_{\mathrm{uni}} < \infty$ such that for both players
$i \in \{1,2\}$,
\[
\sup_{\pi_i}
\EE_{(x,a_1,a_2)\sim \rho \times \pi_i \times \pi^\star_{-i}}
\!\left[
D^2_{\mathcal{G}}\big((x,a_1,a_2);\pi_b\big)
\right]
\;\le\;
C_{\mathrm{uni}},
\]
where $\pi^\star_{-i}$ denotes the Nash policy of the opponent.
\end{assumption}
This condition requires the behavior policy $\pi_b$ to cover all \emph{unilateral deviation distributions} from the NE. As shown in~\citet{cui2022offline}, such unilateral coverage is in fact necessary for solving offline two-player zero-sum games.

\section{Offline Learning in KL-Regularized Games without Pessimism}\label{sec:info}
In this section, we study the sample complexity of offline learning in KL-regularized contextual zero-sum games under the unilateral coverage condition, focusing on statistical rather than computational efficiency. We show that  sharp sample complexity guarantees can be obtained without introducing any form of pessimism. These results characterize the statistical difficulty of the problem and will serve as the foundation for the efficient algorithm developed in the next section.

\subsection{Algorithm}
\begin{algorithm*}[t]
\caption{Offline Learning in KL-Regularized Games without Pessimism}
\label{alg:minimax}
\begin{algorithmic}[1]

\STATE \textbf{Input:} coefficient $\eta>0$, offline dataset 
$\mathcal{D}$, reference policy $\piref$, function class $\mathcal{G}$.

\STATE \textbf{Payoff estimation:} compute the least-squares estimator
\[
\wh{g} \in \arg\min_{g\in\mathcal{G}}
\sum_{i=1}^n \big(g(x_i,a_{i,1},a_{i,2})-p_i\big)^2 .
\]

\STATE \textbf{Equilibrium computation:} solve the KL-regularized zero-sum game induced by $\wh{g}$,
\[
\wh{\pi}\in 
\arg\max_{\pi_1\in\Pi}\;\arg\min_{\pi_2\in\Pi}\; J(\wh{g},\pi_1,\pi_2) .
\]

\STATE \textbf{Output:} policy pair $\wh{\pi}=(\wh{\pi}_1,\wh{\pi}_2)$.
\end{algorithmic}
\end{algorithm*}

The algorithm is summarized in Algorithm~\ref{alg:minimax}.
A key distinction from existing offline equilibrium learning methods is that our approach does \emph{not} employ any form of pessimism in payoff estimation.
In prior work on offline learning in games, pessimism is typically enforced either by subtracting pointwise uncertainty bonuses from the estimated payoff or by taking worst-case values over a constructed version space~\citep{cui2022offline,zhang2023offline,ye2024online}. Such pessimistic designs often introduce additional algorithmic and implementation complexity, as they require tuning uncertainty terms or solving auxiliary optimization problems. In contrast, our algorithm relies solely on least-squares estimation without introducing any explicit pessimistic bias.

\subsection{Analysis}
Pessimism-based algorithms and analyses are the standard approach for obtaining sample complexity guarantees in offline RL and offline game learning~\citep{liu2020provably,jin2021pessimism,cui2022offline}. 
In this subsection, we present a novel analysis that departs from this paradigm. Note that the duality gap of a policy pair $\pi=(\pi_1,\pi_2)$ can be written as
\begin{align*}
\mathrm{DualGap}(\pi)=\pa{J(g^\star,\dagger,\pi_2)
-J(g^\star,\pi_1^\star,\pi_2^\star)}+
\pa{J(g^\star,\pi_1^\star,\pi_2^\star)-
J(g^\star,\pi_1,\dagger)}.
\end{align*}
By symmetry of the two players, it suffices to control one of the two terms above.

\subsubsection{Previous Analysis in Offline Game Learning}
Before proceeding to our analysis, we briefly review the pessimism-based framework adopted in prior work on offline game learning~\citep{ye2024online}. 
Their algorithm constructs a pessimistic payoff estimate $\underline g$ such that
$J(\underline g,\cdot,\cdot)$ is a high-probability lower bound of $J(g^\star,\cdot,\cdot)$.

Let $\wh{\pi}_1$ denote the output policy of the pessimistic algorithm, and define the best responses under the true and pessimistic games by
$\pi_2^\dagger=\argmin_{\pi_2\in\Pi} J(g^\star,\wh{\pi}_1,\pi_2)$ and $\wt{\pi}_2=\argmin_{\pi_2\in\Pi} J(\underline g,\wh{\pi}_1,\pi_2)$,
and similarly
\(
\wt{\pi}_2^\star=\argmin_{\pi_2\in\Pi} J(\underline g,\pi_1^\star,\pi_2).
\)
The suboptimality of $\wh{\pi}_1$ can then be decomposed as
\begin{align*}
&J(g^\star,\pi_1^\star,\pi_2^\star)-J(g^\star,\wh{\pi}_1,\pi_2^\dagger)
=
\underbrace{J(g^\star,\pi_1^\star,\pi_2^\star)-J(g^\star,\pi_1^\star,\wt{\pi}_2^\star)}_{q_1}
+\underbrace{J(g^\star,\pi_1^\star,\wt{\pi}_2^\star)-J(\underline g,\pi_1^\star,\wt{\pi}_2^\star)}_{q_2}\\
&+\underbrace{J(\underline g,\pi_1^\star,\wt{\pi}_2^\star)-J(\underline g,\wh{\pi}_1,\wt{\pi}_2)}_{q_3}
+\underbrace{J(\underline g,\wh{\pi}_1,\wt{\pi}_2)-J(\underline g,\wh{\pi}_1,\pi_2^\dagger)}_{q_4}+\underbrace{J(\underline g,\wh{\pi}_1,\pi_2^\dagger)-J(g^\star,\wh{\pi}_1,\pi_2^\dagger)}_{q_5}.
\end{align*}

By the Nash optimality of $(\pi_1^\star,\pi_2^\star)$ under $J$ and the saddle-point property of $(\wh{\pi}_1,\wt{\pi}_2)$ under $\underline g$, we have $q_1\le 0$ and $q_3\le 0$.
Moreover, since $\wt{\pi}_2$ is the best response to $\wh{\pi}_1$ under $\underline g$, $q_4\le 0$.
Finally, the pessimism construction guarantees $\underline g \le g^\star$, which implies $q_5\le 0$.
Therefore, the entire gap reduces to the single statistical term
\[
q_2
=
J(g^\star,\pi_1^\star,\wt{\pi}_2^\star)-J(\underline g,\pi_1^\star,\wt{\pi}_2^\star),
\]
which can be bounded by $\otil(1/\sqrt{n})$ via standard concentration arguments. 
This pessimism-based decomposition is the standard route to obtain sample complexity guarantees in offline game learning. However, it typically yields only a $\otil(1/\sqrt{n})$ statistical rate and fundamentally relies on constructing a pessimistic surrogate, which is precisely what we avoid in our analysis.

\subsubsection{Analysis without Pessimism: A Fast-rate Bound}
In this subsection, we develop an analysis for Algorithm~\ref{alg:minimax} that is fundamentally different from the pessimism-based approaches in prior offline game learning. By leveraging the geometric structure induced by KL regularization, we establish an $\otil(1/n)$ sample complexity bound, significantly improving upon the prior $\otil(1/\sqrt{n})$ rate. Recall that it suffices to control one side of the duality gap $J(g^\star,\dagger,\wh{\pi}_2)-J(g^\star,\pi_1^\star,\pi_2^\star)$. We decompose the term as
\[
J(g^\star,\dagger,\wh{\pi}_2)-J(g^\star,\pi_1^\star,\pi_2^\star)
\!=\!
\underbrace{J(g^\star,\dagger,\wh{\pi}_2)\!-\!J(g^\star,\pi_1^\star,\wh{\pi}_2)}_{\mathrm{Gap}_1}
\!+\!
\underbrace{J(g^\star,\pi_1^\star,\wh{\pi}_2)\!-\!J(g^\star,\pi_1^\star,\pi_2^\star)}_{\mathrm{Gap}_2}.
\]
At a high level, our analysis proceeds in three steps. First, we relate $\mathrm{Gap}_1$ and $\mathrm{Gap}_2$ to the policy deviation between $\wh{\pi}$ and $\pi^\star$, together with a unilateral estimation error term. Second, we invoke a stability argument for the KL-regularized game to bound the policy distance by the unilateral estimation error. Finally, we combine these results with the unilateral concentrability assumption to obtain the fast rate. The full proofs for the lemmas in this subsection are deferred to Appendix~\ref{app:proof_minimax}.

To begin with, we present Lemma~\ref{lem:kl_subopt} for KL-regularized games that characterizes how the value loss incurred by deviating from a best response can be expressed in terms of a KL divergence. 
\begin{lemma}[Best Response Suboptimality as a KL Divergence]
\label{lem:kl_subopt}
Fix any $\pi_1\in\Pi$ and let the Player~2 best response be
\(
\pi_2^\dagger(\pi_1)\in\arg\min_{\pi_2\in\Pi} J(g^\star,\pi_1,\pi_2).
\)
Then for any $\pi_2\in\Pi$,
\[
J(g^\star,\pi_1,\pi_2)-J(g^\star,\pi_1,\pi_2^\dagger(\pi_1))
=\eta^{-1}\E_{x\sim\rho}\!\left[\KL\!\big(\pi_2(\cdot|x)\,\|\,\pi_2^\dagger(\pi_1)(\cdot|x)\big)\right].
\]
Similarly, fix any $\pi_2\in\Pi$ and let the Player~1 best response be
\(
\pi_1^\dagger(\pi_2)\in\arg\max_{\pi_1\in\Pi} J(g^\star,\pi_1,\pi_2).
\)
Then for any $\pi_1\in\Pi$,
\[
J(g^\star,\pi_1^\dagger(\pi_2),\pi_2)-J(g^\star,\pi_1,\pi_2)
=\eta^{-1}\E_{x\sim\rho}\!\left[\KL\!\big(\pi_1(\cdot|x)\,\|\,\pi_1^\dagger(\pi_2)(\cdot|x)\big)\right].
\]
\end{lemma}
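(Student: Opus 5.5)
The plan is to compute the best response in closed form and then verify the identity by direct algebra, pointwise in the context $x$. I treat Player~2; Player~1 is symmetric with signs flipped.

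Fix $\pi_1$ and define the marginal payoff $\bar g(x,a_2):=\E_{a_1\sim\pi_1(\cdot|x)} g^\star(x,a_1,a_2)$. The term of $J$ involving $\pi_1$'s KL does not depend on $\pi_2$, so $J(g^\star,\pi_1,\pi_2)$ equals a constant $c(\pi_1)$ plus
\[
\E_{x\sim\rho}\Big[\textstyle\sum_{a_2}\pi_2(a_2|x)\big(\bar g(x,a_2)+\eta^{-1}\log\tfrac{\pi_2(a_2|x)}{\piref(a_2|x)}\big)\Big].
\]
This objective is minimized separately for each $x$.

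The first step identifies the minimizer as the Gibbs policy
\[
\pi_2^\dagger(a_2|x)=\piref(a_2|x)\exp(-\eta\bar g(x,a_2))/Z(x),\qquad Z(x)=\textstyle\sum_{a}\piref(a|x)e^{-\eta\bar g(x,a)}.
\]
This can be obtained from Lagrange conditions or the Gibbs variational principle, and it is unique by strict convexity. It has the same support as $\piref$, so it lies in $\Pi$.

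The second step substitutes $\bar g(x,a_2)=-\eta^{-1}\big(\log\tfrac{\pi_2^\dagger(a_2|x)}{\piref(a_2|x)}+\log Z(x)\big)$ into the objective. For any $\pi_2\in\Pi$ the integrand then becomes
\[
\eta^{-1}\textstyle\sum_{a_2}\pi_2(a_2|x)\log\tfrac{\pi_2(a_2|x)}{\pi_2^\dagger(a_2|x)}-\eta^{-1}\log Z(x)=\eta^{-1}\KL(\pi_2\|\pi_2^\dagger)-\eta^{-1}\log Z(x).
\]
Plugging in $\pi_2=\pi_2^\dagger$ gives $-\eta^{-1}\log Z(x)$. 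Subtracting the two expressions and taking $\E_{x\sim\rho}$ yields the claimed identity. As a byproduct, this also shows that any $\arg\min$ coincides with the Gibbs policy.

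For Player~1, define $\tilde g(x,a_1)=\E_{a_2\sim\pi_2}g^\star(x,a_1,a_2)$. The maximizer is then $\pi_1^\dagger\propto\piref e^{\eta\tilde g}$. The same substitution gives an integrand equal to $-\eta^{-1}\KL(\pi_1\|\pi_1^\dagger)+\eta^{-1}\log Z_1(x)$, which produces the second identity with the sign as stated.

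The main obstacle is only bookkeeping: handling supports correctly. The restriction to $\Pi$ ensures every log-ratio is finite, and actions outside $\supp(\piref)$ contribute nothing. One also needs $\KL(\pi_2\|\pi_2^\dagger)$ to be well defined, which holds because $\pi_2^\dagger$ has full support on $\supp(\piref)$.
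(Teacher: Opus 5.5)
Your proof is correct and follows essentially the same route as the paper. Both arguments write the best response in its Gibbs/first-order form, substitute the marginal payoff as $\eta^{-1}$ times the log-ratio plus a normalizing constant, and observe that the constant cancels, leaving $\eta^{-1}\KL$; your version, which writes each value as $\eta^{-1}\KL(\pi_2\|\pi_2^\dagger)-\eta^{-1}\log Z(x)$, is just more explicit about the normalizer, uniqueness, and supports.
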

This lemma allows us to translate suboptimality with respect to best responses into a KL discrepancy
between policies, which will play a central role in the subsequent analysis. By Lemma~\ref{lem:kl_subopt}, fixing $\pi_2=\wh{\pi}_2$ and taking $\pi_1=\pi_1^\star$, we have
\begin{align}\label{eq:gap1_kl}
\mathrm{Gap}_1
= J(g^\star,\dagger,\wh{\pi}_2)-J(g^\star,\pi_1^\star,\wh{\pi}_2)
= \eta^{-1}\E_{x\sim\rho}\!\left[\KL\!\big(\pi_1^\star(\cdot|x)\,\|\,\pi_1^\dagger(\wh{\pi}_2)(\cdot|x)\big)\right].
\end{align}
Similarly, fixing $\pi_1=\pi_1^\star$ and taking $\pi_2=\wh{\pi}_2$, we obtain
\begin{align}\label{eq:gap2_kl}
\mathrm{Gap}_2
= J(g^\star,\pi_1^\star,\wh{\pi}_2)-J(g^\star,\pi_1^\star,\pi_2^\star)
= \eta^{-1}\E_{x\sim\rho}\!\left[\KL\!\big(\wh{\pi}_2(\cdot|x)\,\|\,\pi_2^\star(\cdot|x)\big)\right].
\end{align}
\paragraph{Geometry of KL-regularized Best Responses.} To relate these KL terms to the policy distance, we exploit the softmax structure of KL-regularized best responses. Specifically, the KL divergence between two best response policies can be controlled by the difference between their logits, which depend linearly on the opponent’s policy. We formalize this relationship in the following lemma, a canonical result in the online learning and information-theoretic literature (e.g., \citet{cesa2006prediction,cover1999elements}).


\begin{lemma}[KL Bound via Logits Distance]
\label{lem:kl_logit}
Let $P=\Softmax(z)$ and $Q=\Softmax(z')$ for some logits $z,z'\in\R^{|\Acal|}$. Then
\[
\KL(P\|Q) \le \frac{1}{2}\|z-z'\|_\infty^2 .
\]
\end{lemma}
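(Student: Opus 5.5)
The idea is to write $\KL(P\|Q)$ as a Bregman divergence of the log-partition function and then control that divergence through the curvature of the log-partition function. Set $A(z):=\log\sum_{a\in\Acal} e^{z_a}$, so that $\Softmax(z)=\nabla A(z)$. A direct computation gives
\[
\KL(P\|Q)=\sum_a P(a)\big[(z_a-A(z))-(z'_a-A(z'))\big]=A(z')-A(z)-\langle \nabla A(z),\,z'-z\rangle ,
\]
which is exactly the Bregman divergence $D_A(z',z)$. We will also use that both sides of the claim are invariant under adding a constant vector $c\mathbf{1}$ to $z'$. This holds because $\Softmax$ is shift invariant. We may therefore recenter freely.

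Next we expand to second order. By Taylor's theorem with integral remainder, with $\Delta:=z'-z$,
\[
D_A(z',z)=\int_0^1 (1-t)\,\Delta^\top \nabla^2 A(z+t\Delta)\,\Delta\,dt .
\]
For any logits $w$, the Hessian is $\nabla^2A(w)=\mathrm{diag}(p)-pp^\top$ with $p=\Softmax(w)$. Hence $\Delta^\top\nabla^2A(w)\Delta=\mathrm{Var}_{a\sim p}(\Delta_a)$.

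The variance of a random variable taking values in an interval of length $L$ is at most $L^2/4$ (Popoviciu's inequality). Here $L=\max_a\Delta_a-\min_a\Delta_a\le 2\|\Delta\|_\infty$, so the variance is at most $\|\Delta\|_\infty^2$. Integrating $\int_0^1(1-t)\,dt=1/2$ then yields $\KL(P\|Q)\le \tfrac12\|z-z'\|_\infty^2$, which is the claim.

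An alternative route is to recognize $D_A(z',z)$ as the log-moment generating function bound of Hoeffding's lemma. In that view we would have
\[
\KL(P\|Q)=\log \E_{a\sim P}\big[e^{\Delta_a}\big]-\E_{a\sim P}[\Delta_a],
\]
and Hoeffding's lemma gives at most $(\max\Delta-\min\Delta)^2/8\le\|\Delta\|_\infty^2/2$. This route actually gives the sharper constant $\tfrac18$ times the squared range.

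The only delicate step is the curvature bound, namely showing that the softmax Hessian quadratic form is dominated by $\|\Delta\|_\infty^2$. I would handle it via the variance identity above rather than through operator norms. That way no dependence on $|\Acal|$ appears.
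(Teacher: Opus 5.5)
Your proposal is correct and is essentially the paper's argument: you write $\KL(P\|Q)$ as the Bregman divergence of log-sum-exp, identify the Hessian quadratic form as $\mathrm{Var}_{a\sim p}(\Delta_a)$, and bound it by $\|\Delta\|_\infty^2$. The paper bounds the variance by $\E[\Delta_a^2]\le\|\Delta\|_\infty^2$ rather than via Popoviciu, and phrases the last step as $1$-smoothness in $\ell_\infty$ rather than as the integral remainder. One small slip that affects nothing: the right-hand side $\|z-z'\|_\infty^2$ is not invariant under $z'\mapsto z'+c\mathbf{1}$ (only the left-hand side and the range of $\Delta$ are), but you never use that recentering, and what it legitimately buys is your sharper $\tfrac18(\max_a\Delta_a-\min_a\Delta_a)^2$ bound.
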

Lemma~\ref{lem:kl_logit} allows us to upper bound the KL divergence between two softmax distributions by the squared $\ell_\infty$ distance between their logits. We apply this lemma first to bound $\mathrm{Gap}_1$. Fix a context $x$. Since both $\pi_1^\dagger(\wh{\pi}_2)(\cdot|x)$ and $\pi_1^\star(\cdot|x)$ are KL-regularized best responses, they admit softmax representations
$\pi_1^\dagger(\wh{\pi}_2)(\cdot|x)=\Softmax(z^\dagger(x))$ and $\pi_1^\star(\cdot|x)=\Softmax(z^\star(x))$,
with logits
$z^\dagger_a(x)=\eta\,\E_{a_2\sim\wh{\pi}_2}[g^\star(x,a,a_2)]+\log\piref(a|x)$ and
$z^\star_a(x)=\eta\,\E_{a_2\sim\pi_2^\star}[g^\star(x,a,a_2)]+\log\piref(a|x)$. Using Hölder's inequality and the fact that $|g^\star|\le 1$, the $\ell_\infty$-distance is bounded by
\[
\|z^\dagger(x)-z^\star(x)\|_\infty
\le \eta\,\|\wh{\pi}_2(\cdot|x)-\pi_2^\star(\cdot|x)\|_1 .
\]
By Lemma~\ref{lem:kl_logit}, this implies
\[
\KL\big(\pi_1^\star(\cdot|x)\,\|\,\pi_1^\dagger(\wh{\pi}_2)(\cdot|x)\big)
\le \tfrac{\eta^2}{2}\,\|\wh{\pi}_2(\cdot|x)-\pi_2^\star(\cdot|x)\|_1^2 .
\]
Taking expectation over $x\sim\rho$ and using Eq.~\eqref{eq:gap1_kl}, we obtain
\[
\mathrm{Gap}_1 \le \frac{\eta}{2}\,\E_{x \sim \rho}\!\big[\|\wh{\pi}_2(\cdot|x)-\pi_2^\star(\cdot|x)\|_1^2\big].
\]
On the other hand, for $\mathrm{Gap}_2$, we must address the mismatch between the estimated payoff $\wh{g}$ and the true payoff $g^\star$. The logits of $\wh{\pi}_2$ and $\pi_2^\star$ differ by $-\eta (\E_{\wh{\pi}_1}[\wh{g}] - \E_{\pi_1^\star}[g^\star])$. To invoke the unilateral concentrability assumption, we decompose this difference by adding and subtracting $\E_{a_1 \sim \pi_1^\star(\cdot|x)}[\wh{g}]$, which isolates the estimation error on the Nash policy:
\[
\wh{z}_a(x) - z^\star_a(x)
= -\eta \Big(
\underbrace{\E_{a_1 \sim \wh{\pi}_1(\cdot|x)}[\wh{g}] - \E_{a_1 \sim \pi_1^\star(\cdot|x)}[\wh{g}]}_{\text{Policy Deviation}}
+
\underbrace{\E_{a_1 \sim \pi_1^\star(\cdot|x)}[\wh{g} - g^\star]}_{\text{Unilateral Estimation Error}}
\Big).
\]
For the policy deviation term, Hölder's inequality implies it is bounded by $\|\wh{\pi}_1(\cdot|x) - \pi_1^\star(\cdot|x)\|_1$. Consequently, applying Lemma~\ref{lem:kl_logit} to the decomposed logits yields the following bound:
\begin{align}\label{eq:gap2_bound}
\mathrm{Gap}_2
\!\le\!
2\eta\E_{x\sim\rho}\big[ \|\wh{\pi}_1(\cdot|x) - \pi_1^\star(\cdot|x)\|_1^2 \big]
\!+\!
2\eta\E_{x\sim\rho}\Big[ \!\max_{a_2 \in \Acal} \big( \E_{a_1 \sim \pi_1^\star(\cdot|x)}[\wh{g}(x,a_1,a_2) - g^\star(x,a_1,a_2)] \!\big)^2\! \Big].
\end{align}
The second term in Eq.~\eqref{eq:gap2_bound} is exactly the squared unilateral estimation error, which is controlled by our data coverage assumption. The first term, together with the bound for $\mathrm{Gap}_1$, constitutes the total policy distance from the Nash equilibrium, which is analyzed via the following stability argument.

\paragraph{Stability under Unilateral Errors.} 
Fix a context $x$ and denote
$\pi_x\!=\!(\!\pi_1(\cdot|x),\pi_2(\cdot|x)\!)$.
Let $G_g(x)\in\R^{A\times A}$ be the payoff matrix with entries
$[G_g(x)]_{a_1,a_2}=g(x,a_1,a_2)$. We define the KL-regularized saddle-point operator at context $x$ by
\[
\textstyle F_{g,x}(\pi_x):=M_{g,x}\pi_x+\eta^{-1}\nabla\Reg(\pi_x),
\qquad
M_{g,x}:=\begin{pmatrix}0&-G_g(x)\\ G_g(x)^\top&0\end{pmatrix},
\]
where $\Reg(\pi_x)
=
\KL(\pi_1(\cdot|x)\,\|\,\piref(\cdot|x))
+
\KL(\pi_2(\cdot|x)\,\|\,\piref(\cdot|x)).$ The Nash equilibrium satisfies the corresponding first-order optimality condition, which we write in operator form as $F_{g^\star,x}(\pi_x^\star)=0$ and $F_{\wh g,x}(\wh\pi_x)=0$.

The KL regularizer is strongly convex in the joint $\ell_1$ geometry, which implies that the saddle-point operator $F_{g,x}$ is $\frac{1}{2\eta}$-strongly monotone. Therefore,
\[
\textstyle\frac{1}{2\eta}\,\|\wh{\pi}_x-\pi_x^\star\|_1^2
\le
\big\langle
F_{g^\star,x}(\wh{\pi}_x)-F_{g^\star,x}(\pi_x^\star),
\wh{\pi}_x-\pi_x^\star
\big\rangle .
\]
Using the identities $F_{g^\star,x}(\pi_x^\star)=0$ and $F_{\wh{g},x}(\wh{\pi}_x)=0$, and noting that the regularization terms cancel, the RHS reduces to
$\big\langle (M_{g^\star,x}-M_{\wh{g},x})\wh{\pi}_x,\wh{\pi}_x-\pi_x^\star\big\rangle$. 

Next we use the skew-symmetry of the payoff operator $M_{g,x}$, a known property
of two-player zero-sum games when written in operator form \citep{nemirovski2004prox,rakhlin2013optimization},
and decompose $\wh{\pi}_x=\pi_x^\star+(\wh{\pi}_x-\pi_x^\star)$.
Since $\Delta M_x \coloneqq M_{g^\star,x}-M_{\wh{g},x}$ is skew-symmetric, the quadratic term 
$\langle \Delta M_x(\wh{\pi}_x-\pi_x^\star),\,\wh{\pi}_x-\pi_x^\star\rangle$ vanishes, 
and the right-hand side reduces to 
$\langle \Delta M_x\pi_x^\star,\,\wh{\pi}_x-\pi_x^\star\rangle$. 
This cancellation localizes the analysis to the estimation error evaluated only
on unilateral deviations from $\pi^\star$. We summarize this stability
result in Lemma~\ref{lem:l1_stability}.
\begin{lemma}[Stability under Unilateral Estimation Error]
\label{lem:l1_stability}
Define the unilateral estimation error at context $x$ by
\[
\mathcal{E}(x)
\!\!\;:=\;\!\!
\max\!\Big\{
\!\big\|\E_{a_2\sim \pi_2^\star(\cdot|x)}[\wh{g}(x,\cdot,a_2)-g^\star(x,\cdot,a_2)]\big\|_\infty,\;
\!\big\|\E_{a_1\sim \pi_1^\star(\cdot|x)}[\wh{g}(x,a_1,\cdot)-g^\star(x,a_1,\cdot)]\big\|_\infty\!
\Big\}.
\]
For any context $x$, let $\pi_x^\star$ and $\wh{\pi}_x$ denote the Nash equilibria of $g^\star$ and $\wh{g}$, respectively. Then
\[
\|\wh{\pi}_x-\pi_x^\star\|_1 \le 2\eta\,\mathcal{E}(x).
\]
\end{lemma}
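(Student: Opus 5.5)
We follow the route already sketched in the preceding paragraph and make each step precise at a fixed context $x$. The plan has three steps: strong monotonicity of $F_{g^\star,x}$, cancellation via the first-order conditions and skew-symmetry, and a Hölder bound that turns the remaining linear term into $\mathcal{E}(x)$ times the $\ell_1$ distance.

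\textbf{Strong monotonicity.} Since $\wh\pi_x$ and $\pi_x^\star$ both lie in the relative interior of the product of simplices restricted to $\supp(\piref(\cdot|x))$, the first-order conditions $F_{g^\star,x}(\pi_x^\star)=0$ and $F_{\wh g,x}(\wh\pi_x)=0$ hold. Strictly, they hold modulo constants on each block, because of the simplex constraint. These constants are harmless: they are annihilated by inner products with differences of distributions, since $\wh\pi_i-\pi_i^\star$ sums to zero on each block.

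Pinsker's inequality shows that each $\KL(\cdot\|\piref)$ is $1$-strongly convex with respect to $\|\cdot\|_1$ on the simplex. For the joint regularizer this gives
\[
\langle\nabla\Reg(\wh\pi_x)-\nabla\Reg(\pi_x^\star),\wh\pi_x-\pi_x^\star\rangle\ge \|\wh\pi_{1}-\pi_{1}^\star\|_1^2+\|\wh\pi_{2}-\pi_{2}^\star\|_1^2\ge \tfrac12\|\wh\pi_x-\pi_x^\star\|_1^2 .
\]
The last step uses the paper's convention that the joint norm is the sum of the two $\ell_1$ norms. The matrix part contributes $\langle M_{g^\star,x}(\wh\pi_x-\pi_x^\star),\wh\pi_x-\pi_x^\star\rangle=0$ by skew-symmetry. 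Hence $\langle F_{g^\star,x}(\wh\pi_x)-F_{g^\star,x}(\pi_x^\star),\wh\pi_x-\pi_x^\star\rangle\ge\frac1{2\eta}\|\wh\pi_x-\pi_x^\star\|_1^2$.

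\textbf{Cancellation.} Using $F_{g^\star,x}(\pi_x^\star)=0$, we write $F_{g^\star,x}(\wh\pi_x)=F_{g^\star,x}(\wh\pi_x)-F_{\wh g,x}(\wh\pi_x)=\Delta M_x\wh\pi_x$, where $\Delta M_x=M_{g^\star,x}-M_{\wh g,x}$. This holds up to the blockwise constants noted above. We then split $\wh\pi_x=\pi_x^\star+(\wh\pi_x-\pi_x^\star)$. The quadratic term $\langle\Delta M_x(\wh\pi_x-\pi_x^\star),\wh\pi_x-\pi_x^\star\rangle$ vanishes because $\Delta M_x$ is skew-symmetric. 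This leaves
\[
\tfrac1{2\eta}\|\wh\pi_x-\pi_x^\star\|_1^2\le\langle\Delta M_x\pi_x^\star,\wh\pi_x-\pi_x^\star\rangle .
\]

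\textbf{Hölder bound.} Write $\Delta G=G_{g^\star}(x)-G_{\wh g}(x)$. Then $\Delta M_x\pi_x^\star=(-\Delta G\,\pi_2^\star,\ \Delta G^\top\pi_1^\star)$. The first block is, entrywise, $\E_{a_2\sim\pi_2^\star}[\wh g(x,\cdot,a_2)-g^\star(x,\cdot,a_2)]$, and the second is minus $\E_{a_1\sim\pi_1^\star}[\wh g(x,a_1,\cdot)-g^\star(x,a_1,\cdot)]$. Each block therefore has $\ell_\infty$ norm at most $\mathcal{E}(x)$. Applying Hölder blockwise bounds the right-hand side by
\[
\mathcal{E}(x)\big(\|\wh\pi_1-\pi_1^\star\|_1+\|\wh\pi_2-\pi_2^\star\|_1\big)=\mathcal{E}(x)\|\wh\pi_x-\pi_x^\star\|_1 .
\]
Dividing by $\|\wh\pi_x-\pi_x^\star\|_1$ gives the claim; if this quantity is $0$ there is nothing to prove.

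\textbf{Main obstacle.} The only delicate point is keeping the constants consistent. The claimed factor $2\eta$ needs the strong-monotonicity modulus $\frac1{2\eta}$ with respect to the summed $\ell_1$ norm, and this comes from the inequality $a^2+b^2\ge\frac12(a+b)^2$. We also need the justification that the Lagrange multipliers of the simplex constraints drop out, and that all policies remain interior on $\supp(\piref)$ so that $\nabla\Reg$ is well defined. The latter holds because the softmax form of KL-regularized best responses gives full support on $\supp(\piref)$.
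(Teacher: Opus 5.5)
Your proof is correct and follows the paper's argument step for step. The steps are strong monotonicity of $F_{g^\star,x}$ with modulus $\frac{1}{2\eta}$ in the summed $\ell_1$ norm, cancellation of the regularizer via the two first-order conditions, skew-symmetry of $\Delta M_x$ to remove the quadratic term, and Hölder's inequality to extract $\mathcal{E}(x)$; your remarks on the blockwise Lagrange constants and on full support over $\supp(\piref(\cdot|x))$ usefully make explicit what the paper leaves implicit when it writes $F=0$.
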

Lemma~\ref{lem:l1_stability} implies that the learned policy $\wh{\pi}$ remains close to the true Nash equilibrium $\pi^\star$ as long as the payoff estimation error is small under unilateral deviations. This stability result allows us to convert the policy distances $\|\wh{\pi} - \pi^\star\|_1$ into estimation errors evaluated only on the unilateral deviations from $\pi^\star$, thereby bypassing the need for coverage on the learned policy $\wh{\pi}$.

\paragraph{Unilateral Concentrability.}
We now combine the bounds for $\mathrm{Gap}_1$ and $\mathrm{Gap}_2$ to derive the final sample complexity.
Substituting the stability bound from Lemma~\ref{lem:l1_stability} into Eq.~\eqref{eq:gap1_kl} and Eq.~\eqref{eq:gap2_bound},
\begin{align*}
\mathrm{Gap}_1 + \mathrm{Gap}_2 
\le (2\eta + 10\eta^3) \E_{x\sim\rho}\big[\mathcal{E}(x)^2\big].
\end{align*}
By the definition of the $D^2$-divergence and Assumption~\ref{assm:uni_cover}, for any unilateral deviation distribution $\mu=\rho\times \pi_i\times \pi_{-i}^\star$ with $i\in\{1,2\}$, we have
\[
\E_{\mu}\!\big[(\wh{g}-g^\star)^2\big]
\;\le\;
\E_{\mu}\!\big[D_{\Gcal}^2(\cdot;\pi_b)\big]\,
\E_{\rho\times \pi_b}\!\big[(\wh{g}-g^\star)^2\big]
\;\le\;
C_{\mathrm{uni}}\,
\E_{\rho\times \pi_b}\!\big[(\wh{g}-g^\star)^2\big].
\]
Since $\wh{g}$ is obtained by least squares over the finite class $\Gcal$, standard fast-rate generalization bounds imply that with probability at least $1-\delta$, the squared error on the training distribution scales as $\otil(1/n)$.
Combining these results, we conclude that
\[
\mathrm{DualGap}(\wh{\pi}) \le (2\eta + 10\eta^3) \cdot C_{\mathrm{uni}} \cdot \otil\left(\frac{1}{n}\right) = \otil\left(\frac{(\eta+\eta^3) C_{\mathrm{uni}}}{n}\right),
\]
which completes the analysis.

\subsection{Theoretical Guarantee}
\begin{theorem}[Fast-rate Sample Complexity Bound without Pessimism]
\label{thm:fast_rate_sc}
Under Assumptions~\ref{asm:realize} and~\ref{assm:uni_cover}, let $\wh{\pi}$ be the output of Algorithm~\ref{alg:minimax}, with probability at least $1-\delta$,
$$
\mathrm{DualGap}(\wh{\pi}) \le \Ocal\pa{\frac{(\eta+\eta^3)\,C_{\mathrm{uni}}\log(|\Gcal|/\delta)}{n}
}.
$$
\end{theorem}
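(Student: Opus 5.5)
The proof assembles the pieces developed in the analysis subsection. First, split $\mathrm{DualGap}(\wh\pi)$ into its two one-sided terms $J(g^\star,\dagger,\wh\pi_2)-J(g^\star,\pi^\star_1,\pi^\star_2)$ and $J(g^\star,\pi^\star_1,\pi^\star_2)-J(g^\star,\wh\pi_1,\dagger)$. By the symmetry between the players (swap roles and negate $g$), it is enough to bound the first. Write it as $\mathrm{Gap}_1+\mathrm{Gap}_2$.

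\textbf{Reducing to policy distance and unilateral error.} Lemma~\ref{lem:kl_subopt} turns each gap into a KL divergence, as in \eqref{eq:gap1_kl} and \eqref{eq:gap2_kl}. Lemma~\ref{lem:kl_logit} together with Hölder's inequality on the softmax logits then gives
\[
\mathrm{Gap}_1\le \tfrac{\eta}{2}\E_x\|\wh\pi_2(\cdot|x)-\pi_2^\star(\cdot|x)\|_1^2 .
\]
For $\mathrm{Gap}_2$ we get \eqref{eq:gap2_bound}, using $(u+v)^2\le 2u^2+2v^2$. The maximum over $a_2$ appearing there is at most $\mathcal E(x)^2$.

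\textbf{Stability step.} Next I would prove Lemma~\ref{lem:l1_stability} per context. Strong monotonicity of $F_{g^\star,x}$ follows because $\Reg$ is $1$-strongly convex in $\ell_1$ on each simplex by Pinsker. Summing the two blocks gives modulus at least $\tfrac{1}{2\eta}$ with respect to $\|\cdot\|_1^2$ in the joint norm, since $(u+v)^2\le 2(u^2+v^2)$.

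The first-order conditions only hold up to normal-cone terms, but both equilibria have full support on $\supp(\piref)$. So the tangent-space inner products with $\wh\pi_x-\pi^\star_x$ kill the constant multipliers, and $F=0$ can be used modulo constants. Skew-symmetry then leaves $\langle\Delta M_x\pi_x^\star,\wh\pi_x-\pi^\star_x\rangle$. Apply Hölder blockwise: the entries of $\Delta M_x\pi^\star_x$ are exactly the unilateral errors, each bounded by $\mathcal E(x)$ in $\ell_\infty$. This gives the bound $\mathcal E(x)\|\wh\pi_x-\pi_x^\star\|_1$, and dividing through yields $\|\wh\pi_x-\pi^\star_x\|_1\le 2\eta\mathcal E(x)$.

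Plugging this in gives $\mathrm{Gap}_1+\mathrm{Gap}_2\le c(\eta+\eta^3)\E_x[\mathcal E(x)^2]$.

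\textbf{Statistical step.} It remains to bound $\E_x[\mathcal E(x)^2]$ by $C_{\mathrm{uni}}\E_{\rho\times\pi_b}[(\wh g-g^\star)^2]$. This is the step I expect to be the main obstacle, because $\mathcal E(x)$ contains a pointwise max over actions inside the expectation. I would handle it by choosing a deviation policy per context. Let $\pi_2$ put all its mass on the maximizing $a_2(x)$, which is a measurable selection; assume this is allowed, i.e., the deviation stays within the support of $\piref$, or else take a limit. Then Jensen's inequality gives
\[
\E_x\big[\mathcal E_2(x)^2\big]\le \E_{\rho\times\pi^\star_1\times\pi_2}\big[(\wh g-g^\star)^2\big].
\]
By the definition of $D^2_{\Gcal}$ (with $g=\wh g$, $h=g^\star$) and Assumption~\ref{assm:uni_cover}, the right-hand side is at most $C_{\mathrm{uni}}\E_{\rho\times\pi_b}[(\wh g-g^\star)^2]$. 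Bound $\max\{\cdot,\cdot\}^2$ by the sum of the two squares, and treat the other player symmetrically.

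Finally, I would invoke the standard fast-rate bound for least squares over a finite realizable class with sub-Gaussian noise, proved by a Bernstein inequality plus a union bound over $\Gcal$. With probability $1-\delta$ it gives $\E_{\rho\times\pi_b}[(\wh g-g^\star)^2]\lesssim \log(|\Gcal|/\delta)/n$. Combining both sides of the duality gap completes the proof.
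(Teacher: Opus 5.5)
Your proposal is correct and follows essentially the same route as the paper: you turn both gaps into KL divergences via Lemma~\ref{lem:kl_subopt}, control them with Lemma~\ref{lem:kl_logit}, obtain the stability bound $\|\wh\pi_x-\pi_x^\star\|_1\le 2\eta\,\mathcal{E}(x)$ from strong monotonicity plus skew-symmetry, and finish with unilateral concentrability and the least-squares fast rate. Your added care only tightens details and costs at most a constant factor: the constant multipliers are killed because $\wh\pi_x-\pi_x^\star$ sums to zero blockwise, the inner max is handled with a per-context point-mass deviation, and you bound $\max\{\cdot,\cdot\}^2$ by a sum of squares instead of interchanging $\E_x$ with $\max_i$ as the paper does.
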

Theorem~\ref{thm:fast_rate_sc} (see Appendix~\ref{app:proof_minimax} for the detailed proof) establishes the first $\otil(1/n)$ statistical rate for offline learning in KL-regularized contextual zero-sum games, improving upon the $\otil(1/\sqrt n)$ rates achieved by prior pessimism-based methods~\citep{ye2024online}. 
More importantly, this fast rate is obtained \emph{without} any form of pessimism or uncertainty penalization. 
Unlike existing approaches that rely on conservative value estimates or worst-case optimization over version spaces, our analysis leverages a fundamentally different mechanism based on stability of KL-regularized equilibria together with unilateral concentrability. This establishes the first provably correct statistical framework for KL-regularized games that eliminates the need for pessimism.

\section{Offline Self-play Policy Optimization}\label{sec:self_play}
While the minimax formulation in Algorithm~\ref{alg:minimax} provides a clean statistical characterization,
directly solving the KL-regularized saddle-point problem may be computationally expensive in large-scale contextual settings.
In this section, we turn to the computational aspect and present an efficient self-play policy optimization algorithm for approximately solving the empirical KL-regularized game induced by $\widehat g$.
\subsection{Algorithm}
Our algorithm is summarized in Algorithm~\ref{alg:self_play}. It follows a two-stage structure. We first compute the least-squares estimator $\widehat g$ from the offline dataset. Then, fixing $\widehat g$, we run a KL-regularized self-play procedure to approximately solve the induced zero-sum game. At each iteration, each player forms its expected payoff against the opponent's current policy and performs a KL-regularized mirror-descent update.
This yields a coupled mirror descent-ascent dynamics that drives the policy pair toward the Nash equilibrium of the empirical regularized game defined by $\widehat g$.

\begin{algorithm*}[t]
\caption{Offline Self-Play Policy Optimization}
\label{alg:self_play}
\begin{algorithmic}[1]

\STATE \textbf{Input:} coefficient $\eta>0$, learning rate schedule $\{\alpha_t\}_{t=0}^{T-1}$, number of iterations $T$, offline dataset $\mathcal{D}$, reference policy $\piref$, function class $\mathcal{G}$.

\STATE \textbf{Payoff estimation:} compute the least-squares estimator
\[
\wh{g} \in \arg\min_{g\in\mathcal{G}} 
\sum_{(x_i,a_{i,1},a_{i,2},p_i)\in\mathcal{D}}
\big(g(x_i,a_{i,1},a_{i,2})-p_i\big)^2 .
\]

\STATE \textbf{Initialization:} set $\pi_1^{(0)}(\cdot|x)=\piref(\cdot|x)$ and
$\pi_2^{(0)}(\cdot|x)=\piref(\cdot|x)$ for all contexts $x$.

\FOR{$t=0,1,\dots,T-1$}

\STATE Compute the payoff vectors against the opponent:
\[
\wh{f}_1^{(t)}(x,a_1)=\E_{a_2\sim \pi_2^{(t)}(\cdot|x)}[\wh{g}(x,a_1,a_2)],\qquad
\wh{f}_2^{(t)}(x,a_2)=\E_{a_1\sim \pi_1^{(t)}(\cdot|x)}[\wh{g}(x,a_1,a_2)].
\]

\STATE Update Player~1:
\[
\pi_1^{(t+1)}(a|x)\propto
\pi_1^{(t)}(a|x)^{1-\alpha_t\eta^{-1}}
\exp\!\big(\alpha_t\,\wh{f}_1^{(t)}(x,a)\big)\,
\piref(a|x)^{\alpha_t\eta^{-1}} .
\]

\STATE Update Player~2:
\[
\pi_2^{(t+1)}(a|x)\propto
\pi_2^{(t)}(a|x)^{1-\alpha_t\eta^{-1}}
\exp\!\big(-\alpha_t\,\wh{f}_2^{(t)}(\!x,a)\big)\,
\piref(a|x)^{\alpha_t\eta^{-1}} .
\]

\ENDFOR

\STATE \textbf{Output:} final policy pair $(\pi_1^{(T)},\pi_2^{(T)})$.

\end{algorithmic}
\end{algorithm*}

\subsection{Analysis}
We analyze Algorithm~\ref{alg:self_play} by bounding the duality gap of its last iterate; the full proofs for the lemmas in this subsection are deferred to Appendix~\ref{app:proof_self}. 

The key observation is that we can analyze the last iterate through a local transfer argument around the empirical equilibrium $\widehat\pi$:
\[
\mathrm{DualGap}(\pi^{(T)})
\le
\mathrm{DualGap}(\widehat\pi)
+
\Big|
\mathrm{DualGap}(\pi^{(T)})
-
\mathrm{DualGap}(\widehat\pi)
\Big|.
\]
The first term is the statistical error already controlled in Section~\ref{sec:info}. It remains to upper bound the second term, which captures the additional optimization error introduced by running self-play instead of computing the empirical equilibrium exactly.

We first establish a boundedness property of the KL-regularized dynamics. The update in Algorithm~\ref{alg:self_play} preserves a log-linear structure relative to the reference policy, and hence all iterates remain in a bounded log-density-ratio class.

\begin{lemma}[Log-linear Boundedness of Self-play Iterates]
\label{lem:log_linear_bounded}
Let $\{\pi^{(t)}\}_{t=0}^T$ be generated by Algorithm~\ref{alg:self_play} with $\alpha_t\le \eta$ for all $t$.
Then, for every $t\ge 0$ and each player $i\in\{1,2\}$,
\[
\sup_{x}\sup_{a\in\supp(\piref(\cdot|x))}
\left|
\log\frac{\pi_i^{(t)}(a|x)}{\piref(a|x)}
\right|
\le 2\eta .
\]
Moreover, the empirical Nash equilibrium $\widehat\pi$ induced by $\widehat g$ also satisfies this bound.
\end{lemma}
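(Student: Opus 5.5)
The plan is to track the log-density ratio $\ell_i^{(t)}(a|x):=\log\frac{\pi_i^{(t)}(a|x)}{\piref(a|x)}$ on $\supp(\piref(\cdot|x))$. I would split it into an unnormalized logit and a normalizing constant, bound each by $\eta$, and add the two bounds. I treat Player~1; Player~2 is identical with $\wh f_1^{(t)}$ replaced by $-\wh f_2^{(t)}$. Fix $x$ and write $\beta_t:=\alpha_t\eta^{-1}\in[0,1]$, which is where the hypothesis $\alpha_t\le\eta$ is used.

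Taking logarithms in the update of Algorithm~\ref{alg:self_play} and subtracting $\log\piref(a|x)$ gives
\[
\ell_1^{(t+1)}(a|x)=(1-\beta_t)\,\ell_1^{(t)}(a|x)+\alpha_t\,\wh f_1^{(t)}(x,a)-\log Z_t(x),
\]
where $Z_t(x)$ does not depend on $a$. Define unnormalized logits by $u^{(0)}\equiv 0$ and $u^{(t+1)}(a)=(1-\beta_t)u^{(t)}(a)+\alpha_t\wh f_1^{(t)}(x,a)$. By induction, $\ell_1^{(t)}=u^{(t)}-c_t(x)$ for some constant $c_t(x)$. The affine recursion carries an additive constant through as another constant, and $\ell_1^{(0)}=0$ because $\pi_1^{(0)}=\piref$. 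Normalization of $\pi_1^{(t)}$ then forces
\[
c_t(x)=\log\E_{a\sim\piref(\cdot|x)}\big[e^{u^{(t)}(a)}\big].
\]

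Next I bound the logits by induction: $\|u^{(t)}\|_\infty\le\eta$. The base case is trivial. For the step, $|\wh f_1^{(t)}|\le 1$ since $\wh g\in[-1,1]$, so
\[
|u^{(t+1)}(a)|\le(1-\beta_t)\eta+\alpha_t=(1-\beta_t)\eta+\beta_t\eta=\eta.
\]
This convex-combination structure is the only delicate point. It is exactly why the $\piref^{\alpha_t\eta^{-1}}$ factor and the step-size condition are needed; without them the logits could drift.

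Since $u^{(t)}\in[-\eta,\eta]$ pointwise, the log-partition satisfies $c_t(x)\in[-\eta,\eta]$. Hence $|\ell_1^{(t)}(a|x)|\le|u^{(t)}(a)|+|c_t(x)|\le 2\eta$. Points outside $\supp(\piref)$ are excluded, and they stay excluded because the iterates inherit the support of $\piref$.

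For the empirical Nash equilibrium $\wh\pi$, the first-order condition (equivalently, Lemma~\ref{lem:kl_subopt} applied to $\wh g$) gives the softmax form $\wh\pi_1(a|x)\propto\piref(a|x)\exp\big(\eta\,\E_{a_2\sim\wh\pi_2(\cdot|x)}[\wh g(x,a,a_2)]\big)$, and the analogous form for $\wh\pi_2$ with a minus sign. Here the logit is $u=\eta\,\E_{a_2\sim\wh\pi_2(\cdot|x)}[\wh g(x,\cdot,a_2)]\in[-\eta,\eta]$. The same decomposition into logit plus log-partition therefore yields the bound $2\eta$ for $\wh\pi$ as well.
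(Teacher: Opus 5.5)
Your proof is correct and is essentially the paper's argument. The paper runs the induction on the oscillation $\max_a-\min_a$ of the normalized log-ratio, using the same convex-combination step $(1-\alpha_t/\eta)\cdot 2\eta+2\alpha_t=2\eta$, and then bounds the log-normalizer between the minimum and maximum of the unnormalized update; you instead track the sup-norm of the unnormalized logits (bound $\eta$) and bound the log-partition separately (another $\eta$), which is the same mechanism with different bookkeeping. Your treatment of $\wh\pi$ through the softmax best-response form, with logits in $[-\eta,\eta]$, matches the paper's exactly.
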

Lemma~\ref{lem:log_linear_bounded} shows that both the optimization trajectory and the target empirical equilibrium stay in the same bounded policy class. This boundedness is important because it makes the KL regularization term uniformly Lipschitz with respect to policy perturbations, without requiring any lower bound on the minimum action probability. We next show that, over this bounded log-linear class, the duality gap changes smoothly with the policy pair.

\begin{lemma}[Lipschitzness of the Duality Gap]
\label{lem:gap_lipschitz}
Let $\pi=(\pi_1,\pi_2)$ and $\pi'=(\pi_1',\pi_2')$ be two policy pairs.
Suppose that for each player $i\in\{1,2\}$ and every context $x$,
\begin{equation*}
\sup_{a\in\supp(\piref(\cdot|x))}
\max\left\{
\left|\log\frac{\pi_i(a|x)}{\piref(a|x)}\right|,
\left|\log\frac{\pi_i'(a|x)}{\piref(a|x)}\right|
\right\}
\le 2\eta .
\end{equation*}
Then, we have
\begin{equation}
\begin{aligned}
\left|
\mathrm{DualGap}(\pi)
-
\mathrm{DualGap}(\pi')
\right|
\le
3\,\mathbb E_{x\sim\rho}
\Big[
&\|\pi_1(\cdot|x)-\pi_1'(\cdot|x)\|_1  +
\|\pi_2(\cdot|x)-\pi_2'(\cdot|x)\|_1
\Big]. \nonumber
\end{aligned}
\end{equation}
\end{lemma}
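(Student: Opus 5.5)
We prove the Lipschitz bound one side of the gap at a time. Write $\mathrm{DualGap}(\pi)=\Phi_1(\pi_2)-\Phi_2(\pi_1)$, where $\Phi_1(\pi_2):=J(g^\star,\dagger,\pi_2)$ and $\Phi_2(\pi_1):=J(g^\star,\pi_1,\dagger)$. By the triangle inequality it suffices to show
\[
|\Phi_1(\pi_2)-\Phi_1(\pi_2')|\le 3\,\E_{x\sim\rho}\|\pi_2(\cdot|x)-\pi_2'(\cdot|x)\|_1 ,
\]
and the analogous bound for $\Phi_2$ in $\pi_1$. The two are symmetric, so we treat $\Phi_1$ only.

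The key observation is that a maximum of functions, each Lipschitz in $\pi_2$, is Lipschitz with the same constant. Concretely, let $\pi_1^\dagger=\pi_1^\dagger(\pi_2)$ attain the max for $\pi_2$. Then
\[
\Phi_1(\pi_2)-\Phi_1(\pi_2')\le J(g^\star,\pi_1^\dagger,\pi_2)-J(g^\star,\pi_1^\dagger,\pi_2').
\]
The reverse inequality follows with the best response to $\pi_2'$. In this difference the Player~1 regularizer cancels, leaving, for each $x$,
\[
\sum_{a_2}\big(\pi_2(a_2|x)-\pi_2'(a_2|x)\big)\Big[\E_{a_1\sim\pi_1^\dagger}g^\star(x,a_1,a_2)+\eta^{-1}\log\tfrac{\pi_2(a_2|x)}{\piref(a_2|x)}\Big]
+\eta^{-1}\sum_{a_2}\pi_2'(a_2|x)\Big[\log\tfrac{\pi_2(a_2|x)}{\piref(a_2|x)}-\log\tfrac{\pi_2'(a_2|x)}{\piref(a_2|x)}\Big].
\]
The first sum is bounded, via Hölder, by $(1+\eta^{-1}\cdot 2\eta)\|\pi_2-\pi_2'\|_1=3\|\pi_2-\pi_2'\|_1$. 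This uses $|g^\star|\le1$ and the hypothesis $|\log(\pi_2/\piref)|\le 2\eta$ on the support. Here the log-ratio bound makes the regularizer Lipschitz without any lower bound on the action probabilities.

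The second sum equals $-\eta^{-1}\KL(\pi_2'\|\pi_2)\le 0$. It therefore has the right sign for the upper bound $\Phi_1(\pi_2)-\Phi_1(\pi_2')$. For the reverse direction we expand around $\pi_2'$ instead, so the KL term appears as $-\eta^{-1}\KL(\pi_2\|\pi_2')$, again nonpositive, and the Hölder term uses the log-ratio bound on $\pi_2'$. The log-ratio bound on whichever policy we expand around is exactly what the hypothesis supplies for both $\pi$ and $\pi'$.

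The main obstacle is this sign bookkeeping. The nonlinear entropy term must be handled by the KL identity (a convexity argument) rather than by naive Lipschitzness of $\log$, and the expansion point must be chosen per direction. Taking expectations over $x$ and adding the symmetric bound for $\Phi_2$, which involves the Player~1 regularizer with the opposite sign, gives the stated constant $3$.
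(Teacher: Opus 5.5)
Your proof is correct. Its outer structure matches the paper's: both write $\mathrm{DualGap}(\pi)=\Phi_1(\pi_2)-\Phi_2(\pi_1)$. Both then reduce $|\Phi_1(\pi_2)-\Phi_1(\pi_2')|$ to a difference $J(g^\star,\tilde\pi_1,\pi_2)-J(g^\star,\tilde\pi_1,\pi_2')$ at a fixed Player~1 policy; the paper does this via $\sup_{\tilde\pi_1}$, you via the best response on whichever side is larger. Both bound the payoff part by $\|\pi_2-\pi_2'\|_1$ using $|g^\star|\le 1$.

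The routes differ in how the regularizer is handled.
\begin{itemize}
\item The paper bounds $|\KL(\pi_2\|\piref)-\KL(\pi_2'\|\piref)|$ two-sidedly in one shot. It integrates the gradient $\log(\pi_{2,\lambda}/\piref)+1$ along the segment $\pi_{2,\lambda}=(1-\lambda)\pi_2+\lambda\pi_2'$. This requires checking that $\pi_{2,\lambda}/\piref$ stays in $[e^{-2\eta},e^{2\eta}]$ along the whole segment.
\item You instead use the exact first-order convexity identity
\[
\KL(\pi_2\|\piref)-\KL(\pi_2'\|\piref)=\langle \pi_2-\pi_2',\log(\pi_2/\piref)\rangle-\KL(\pi_2'\|\pi_2).
\]
You drop the nonpositive Bregman remainder and flip the expansion point for the reverse inequality.
\end{itemize}

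Your route avoids the integral and the segment check. It also shows slightly more: each one-sided inequality needs the log-ratio bound only at the policy you expand around. The price is the endpoint bookkeeping across the four one-sided bounds.

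For $\Phi_2$, the symmetry you invoke is cleanest seen as follows. The function $-\Phi_2(\pi_1)=\max_{\pi_2}\bigl[-J(g^\star,\pi_1,\pi_2)\bigr]$ carries $+\eta^{-1}\KL(\pi_1\|\piref)$ outside the max, with a payoff bounded by $1$ inside. So it has exactly the structure of $\Phi_1$, and your argument applies verbatim.

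Both routes give the same constant $1+2=3$.
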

Lemma~\ref{lem:gap_lipschitz} allows us to compare the duality gap of the last iterate with that of the empirical equilibrium directly.
Combined with Lemma~\ref{lem:log_linear_bounded}, it yields
\[
\mathrm{DualGap}(\pi^{(T)})
\le
\mathrm{DualGap}(\widehat\pi)
+
3\,\mathbb E_{x\sim\rho}
\left[
\|\pi_1^{(T)}(\cdot|x)-\widehat\pi_1(\cdot|x)\|_1
+
\|\pi_2^{(T)}(\cdot|x)-\widehat\pi_2(\cdot|x)\|_1
\right].
\]
Therefore, it remains to control the distance between the last iterate and the empirical equilibrium, which is achieved by the optimization convergence lemma below.
\begin{lemma}[Optimization Error Convergence]
\label{lem:opt_error}
Let $\widehat{\pi} = (\widehat{\pi}_1, \widehat{\pi}_2)$ be the Nash equilibrium of the regularized empirical game defined by $\widehat g$.
With the time-varying learning rate $\alpha_t = \frac{2\eta}{t+2}$, the output $\pi^{(T)}$ of Algorithm~\ref{alg:self_play} satisfies
\begin{align*}
    \mathbb{E}_{x \sim \rho}
    \!\left[
    \KL(\widehat{\pi}_1(\cdot|x) \| \pi_1^{(T)}(\cdot|x))
    +
    \KL(\widehat{\pi}_2(\cdot|x) \| \pi_2^{(T)}(\cdot|x))
    \right]
    \le
    \frac{16\eta^2}{T+1}.
\end{align*}
Consequently, by Pinsker's inequality and Jensen's inequality, 
\[
\mathbb E_{x\sim\rho}
\left[
\|\pi_1^{(T)}(\cdot|x)-\widehat\pi_1(\cdot|x)\|_1
+
\|\pi_2^{(T)}(\cdot|x)-\widehat\pi_2(\cdot|x)\|_1
\right]
\le
\frac{8\eta}{\sqrt{T+1}} .
\]
\end{lemma}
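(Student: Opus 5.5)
Fix a context $x$ and drop it from the notation; all bounds will be pointwise in $x$ and then averaged over $\rho$. The plan is a standard last-iterate analysis of regularized mirror descent-ascent, using the KL potential $\Phi_t:=\KL(\wh\pi_1\|\pi_1^{(t)})+\KL(\wh\pi_2\|\pi_2^{(t)})$ and proving a one-step recursion of the form $\Phi_{t+1}\le(1-\alpha_t\eta^{-1})\Phi_t+c\,\alpha_t^2$. With $\alpha_t=\frac{2\eta}{t+2}$ we get $1-\alpha_t\eta^{-1}=\frac{t}{t+2}$. Multiplying the recursion by $(t+1)(t+2)$ then telescopes: $(t+1)(t+2)\Phi_{t+1}\le t(t+1)\Phi_t+c\cdot 4\eta^2\frac{t+1}{t+2}$. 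This gives $(T)(T+1)\Phi_T\le 4c\eta^2 T$, i.e.\ $\Phi_T\le 4c\eta^2/(T+1)$. So the target $16\eta^2/(T+1)$ requires $c=4$. The first step, $t=0$, contributes nothing to $\Phi_1$ beyond the $\alpha_0^2$ term because its coefficient is $0$, so the initialization never enters.

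For the one-step inequality, I would view Player 1's update as the mirror step
\[
\pi_1^{(t+1)}=\arg\max_{p}\ \alpha_t\langle \wh f_1^{(t)},p\rangle-\alpha_t\eta^{-1}\KL(p\|\piref)-(1-\alpha_t\eta^{-1})\KL(p\|\pi_1^{(t)}).
\]
One checks that its solution is exactly the displayed multiplicative update. The three-point (Bregman) identity for this strongly concave objective, evaluated at the comparator $\wh\pi_1$, gives
\[
\KL(\wh\pi_1\|\pi_1^{(t+1)})\le(1-\tfrac{\alpha_t}{\eta})\KL(\wh\pi_1\|\pi_1^{(t)})-\alpha_t\big[\langle \wh f_1^{(t)},\wh\pi_1-\pi_1^{(t+1)}\rangle-\eta^{-1}(\KL(\wh\pi_1\|\piref)-\KL(\pi_1^{(t+1)}\|\piref))\big]-\KL(\pi_1^{(t+1)}\|\pi_1^{(t)}).
\]
Player 2 satisfies the symmetric inequality with $-\wh f_2^{(t)}$.

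Summing the two inequalities, I would split each linear term at $\pi^{(t)}$ versus $\pi^{(t+1)}$. The terms evaluated at $\pi^{(t)}$ combine into $\alpha_t[J(\wh g,\wh\pi_1,\pi_2^{(t)})-J(\wh g,\pi_1^{(t)},\wh\pi_2)]$, up to the regularizers. That bracket is $\ge0$ because $\wh\pi$ is a saddle point of $J(\wh g,\cdot,\cdot)$; the bilinear cross terms cancel by skew-symmetry, exactly as in the proof of Lemma~\ref{lem:l1_stability}. The main obstacle is the leftover terms $\alpha_t\langle\wh f_1^{(t)},\pi_1^{(t+1)}-\pi_1^{(t)}\rangle$ and the regularizer differences between $\pi^{(t+1)}$ and $\pi^{(t)}$. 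I would handle them in two ways. First, use $|\wh g|\le1$ and Hölder to bound the linear pieces by $\alpha_t\|\pi_i^{(t+1)}-\pi_i^{(t)}\|_1$. Second, use Lemma~\ref{lem:log_linear_bounded}: the iterates have log-ratios bounded by $2\eta$, so the regularizer is Lipschitz and $\eta^{-1}|\KL(\pi^{(t+1)}\|\piref)-\KL(\pi^{(t)}\|\piref)|\lesssim\|\pi^{(t+1)}-\pi^{(t)}\|_1$. Pinsker, $\KL(\pi_i^{(t+1)}\|\pi_i^{(t)})\ge\frac12\|\cdot\|_1^2$, then absorbs these via $\alpha_t u-\frac12u^2\le\frac{\alpha_t^2}{2}$ (times a constant). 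Bookkeeping the constants so the total is $\le4\alpha_t^2$ is the delicate part. If it is too loose, an alternative is to bound the step size directly: the log-ratio change per step is $O(\alpha_t)$, so $\|\pi^{(t+1)}-\pi^{(t)}\|_1=O(\alpha_t)$.

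Finally, average the pointwise bound over $x\sim\rho$ to obtain the KL statement. For the consequence, apply Pinsker, $\|\cdot\|_1\le\sqrt{2\KL}$, then Jensen, and then $\sqrt a+\sqrt b\le\sqrt{2(a+b)}$. This gives $\E[\|\cdot\|_1+\|\cdot\|_1]\le 2\sqrt{\E\Phi_T}\le 2\sqrt{16\eta^2/(T+1)}=8\eta/\sqrt{T+1}$.
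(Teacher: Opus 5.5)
You have the right overall plan: a KL potential with a recursion $\Phi_{t+1}\le(1-\alpha_t\eta^{-1})\Phi_t+c\,\alpha_t^2$. Your telescoping is correct, the saddle-point sign argument for the bracket evaluated at $\pi^{(t)}$ is correct, and so is the final Pinsker/Jensen step. The gap is in the one-step inequality and in reaching $c=4$.

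Write $\beta_t:=\alpha_t/\eta$. Computing directly from $\pi_1^{(t+1)}\propto(\pi_1^{(t)})^{1-\beta_t}\piref^{\beta_t}e^{\alpha_t\wh f_1^{(t)}}$ gives the exact identity
\begin{align*}
\KL(u\|\pi_1^{(t+1)})={}&(1-\beta_t)\KL(u\|\pi_1^{(t)})-(1-\beta_t)\KL(\pi_1^{(t+1)}\|\pi_1^{(t)})\\
&-\alpha_t\Big[\langle\wh f_1^{(t)},u-\pi_1^{(t+1)}\rangle-\eta^{-1}\big(\KL(u\|\piref)-\KL(\pi_1^{(t+1)}\|\piref)\big)\Big].
\end{align*}
So the consecutive-iterate term carries coefficient $1-\beta_t$, not $1$. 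Your displayed inequality is false whenever $\pi_1^{(t+1)}\neq\pi_1^{(t)}$, and at $t=0$ (where $\beta_0=1$) the negative term disappears entirely. Your Pinsker absorption then only gives $L\alpha_t u-\frac{1-\beta_t}{2}u^2\le\frac{L^2\alpha_t^2}{2(1-\beta_t)}$, which is useless at $t=0$.

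The constants also fail even if you grant coefficient $1$. Hölder on the linear piece contributes $1$, and Lipschitzness of the regularizer from $|\log(\pi/\piref)|\le 2\eta$ contributes $2$. This gives $3\alpha_t u-\frac12u^2\le\frac92\alpha_t^2$ per player, i.e.\ $c=9$. Your fallback, with the natural estimate $\|\pi_i^{(t+1)}-\pi_i^{(t)}\|_1\le 2\alpha_t$, gives $c=12$. Either way you get the right $\eta^2/(T+1)$ rate, but not the stated $16\eta^2/(T+1)$ and $8\eta/\sqrt{T+1}$.

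The missing step is to expand the regularizer difference exactly rather than Lipschitz-bounding it:
\[
\KL(\pi_i^{(t+1)}\|\piref)-\KL(\pi_i^{(t)}\|\piref)=\Big\langle\log\frac{\pi_i^{(t)}}{\piref},\,\pi_i^{(t+1)}-\pi_i^{(t)}\Big\rangle+\KL(\pi_i^{(t+1)}\|\pi_i^{(t)}).
\]
This contributes an extra $-\beta_t\KL(\pi_i^{(t+1)}\|\pi_i^{(t)})$, restoring total coefficient $1$. Define $\delta_{i,t}=\pm\alpha_t\wh f_i^{(t)}-\beta_t\log(\pi_i^{(t)}/\piref)$. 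The leftover for each player then collapses to $\langle\delta_{i,t},\pi_i^{(t+1)}-\pi_i^{(t)}\rangle-\KL(\pi_i^{(t+1)}\|\pi_i^{(t)})$. Since $\pi_i^{(t+1)}\propto\pi_i^{(t)}e^{\delta_{i,t}}$, this equals $\log\E_{\pi_i^{(t)}}e^{\delta_{i,t}}-\E_{\pi_i^{(t)}}\delta_{i,t}$, which is at most $\frac18\operatorname{osc}(\delta_{i,t})^2$ by Hoeffding's lemma.

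Now use the oscillation bound $\operatorname{osc}(\log(\pi_i^{(t)}/\piref))\le 2\eta$ from the proof of Lemma~\ref{lem:log_linear_bounded}. The absolute bound alone gives only oscillation $4\eta$, and again $c=9$. With the oscillation bound, $\operatorname{osc}(\delta_{i,t})\le 4\alpha_t$, hence $2\alpha_t^2$ per player and $c=4$.

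After this correction your argument is equivalent to the paper's. The paper applies the entropic OMD one-step bound (Lemma~\ref{lem:omd_lemma_contextual}) with direction $\delta_{i,t}$ and evaluates the linear term at $\pi^{(t)}$. It then uses $-\langle\pi_i^{(t)}-\wh\pi_i,\log(\pi_i^{(t)}/\piref)\rangle=-\KL(\pi_i^{(t)}\|\piref)+\KL(\wh\pi_i\|\piref)-\KL(\wh\pi_i\|\pi_i^{(t)})$ to extract both the nonpositive saddle bracket and the contraction $-\beta_t V_t$.
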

The proof views self-play as context-wise entropic OMD on the empirical KL-regularized game: applying the OMD one-step inequality yields a first-order term that contracts by payoff cancellation and the saddle-point property of $\widehat\pi$, plus a second-order term controlled by Lemma~\ref{lem:log_linear_bounded}. With $\alpha_t=2\eta/(t+2)$, this gives the last-iterate KL convergence after averaging over $x\sim\rho$, and the $\ell_1$ bound follows from Pinsker's and Jensen's inequalities.
\subsection{Theoretical Guarantee}
We now combine the transfer argument above with the statistical guarantee for the empirical equilibrium established in Section~\ref{sec:info}.
The full proof is provided in Appendix~\ref{app:proof_self}.

\begin{theorem}[Guarantee for Offline Self-Play]
\label{thm:self_play}
Under Assumptions \ref{asm:realize} and \ref{assm:uni_cover},
let $\pi^{(T)}$ be the output of Algorithm~\ref{alg:self_play} with learning rate
$\alpha_t = \frac{2\eta}{t+2}$.
Then with probability at least $1-\delta$, the duality gap satisfies
\[
\mathrm{DualGap}(\pi^{(T)})
\le
\mathcal{O}\!\left(
\frac{\eta}{\sqrt{T+1}}
+
\frac{(\eta+\eta^3)\,C_{\mathrm{uni}}\log(|\mathcal{G}|/\delta)}{n}
\right).
\]
\end{theorem}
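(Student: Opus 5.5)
The proof combines three results stated earlier: the transfer decomposition, the Lipschitz and boundedness lemmas, and the optimization convergence lemma. The first step is to write
\[
\mathrm{DualGap}(\pi^{(T)}) \le \mathrm{DualGap}(\widehat\pi) + \big|\mathrm{DualGap}(\pi^{(T)})-\mathrm{DualGap}(\widehat\pi)\big| .
\]
The first term is exactly the quantity controlled by Theorem~\ref{thm:fast_rate_sc}. On the event of probability at least $1-\delta$ where the least-squares fast-rate bound holds, this term is $\Ocal\big((\eta+\eta^3)C_{\mathrm{uni}}\log(|\Gcal|/\delta)/n\big)$. This event concerns only $\widehat g$, which is the same estimator in Algorithms~\ref{alg:minimax} and~\ref{alg:self_play}. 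The self-play dynamics are deterministic given $\widehat g$, so no further union bound is needed.

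For the second term, I would verify the hypothesis of Lemma~\ref{lem:gap_lipschitz} using Lemma~\ref{lem:log_linear_bounded}. The learning rate $\alpha_t = 2\eta/(t+2)\le \eta$ for all $t\ge0$, so every iterate, in particular $\pi^{(T)}$, has log density ratio bounded by $2\eta$ on $\supp(\piref(\cdot|x))$. The same lemma gives this bound for $\widehat\pi$. Lemma~\ref{lem:gap_lipschitz} then yields
\[
\big|\mathrm{DualGap}(\pi^{(T)})-\mathrm{DualGap}(\widehat\pi)\big| \le 3\,\E_{x\sim\rho}\big[\|\pi_1^{(T)}(\cdot|x)-\widehat\pi_1(\cdot|x)\|_1+\|\pi_2^{(T)}(\cdot|x)-\widehat\pi_2(\cdot|x)\|_1\big].
\]

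Next, I would apply Lemma~\ref{lem:opt_error}, which holds with the same step size $\alpha_t = 2\eta/(t+2)$. It bounds the expectation above by $8\eta/\sqrt{T+1}$, so the optimization term is at most $24\eta/\sqrt{T+1} = \Ocal(\eta/\sqrt{T+1})$. Adding the statistical term gives the stated bound.

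I expect no real obstacle, since every piece is already established; the step needing the most care is the bookkeeping. Specifically, I would check two things. First, the $\widehat\pi$ used in Lemmas~\ref{lem:log_linear_bounded} and~\ref{lem:opt_error} must be the same empirical equilibrium whose duality gap is bounded in Theorem~\ref{thm:fast_rate_sc}; it is, by uniqueness of the regularized equilibrium for $\widehat g$. Second, the duality gap in Lemma~\ref{lem:gap_lipschitz} must be evaluated under the true $g^\star$, which matches the definition used in Theorem~\ref{thm:fast_rate_sc}.
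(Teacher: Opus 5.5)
Your proposal is correct and follows the paper's proof step for step. It uses the same triangle-inequality transfer around $\widehat\pi$, Lemma~\ref{lem:log_linear_bounded} (via $\alpha_t\le\eta$) to meet the hypothesis of Lemma~\ref{lem:gap_lipschitz}, Lemma~\ref{lem:opt_error} for the $8\eta/\sqrt{T+1}$ distance bound (hence $24\eta/\sqrt{T+1}$), and Theorem~\ref{thm:fast_rate_sc} for the statistical term on the same high-probability event for $\widehat g$.
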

The theorem shows that the last iterate of self-play inherits the fast statistical guarantee of the empirical equilibrium, up to an optimization error of order $\mathcal{O}(\eta/\sqrt{T})$.
Thus, self-play provides an efficient iterative approximation to exact empirical equilibrium computation, approaching its pessimism-free statistical guarantee as the optimization error decreases.

\section{Related Literature}\label{sec:related_work}
\paragraph{Offline Reinforcement Learning.}
Offline reinforcement learning (RL) studies the problem of learning decision-making policies from a fixed dataset collected by a behavior policy, without further interaction.
A fundamental difficulty in this setting is distribution shift: the target policy may induce state--action distributions that are insufficiently covered by the offline data, leading to extrapolation error and overestimation~\citep{levine2020offline}.
To address this challenge, the principle of \emph{pessimism} has become the standard tool for obtaining provable guarantees in offline RL~\citep{liu2020provably,rashidinejad2021bridging,jin2021pessimism,xie2021bellman,uehara2021pessimistic,zhan2022offline}.
Pessimistic methods penalize values of poorly supported state--action pairs, typically through explicit uncertainty bonuses or worst-case value constructions, and this framework is now well understood and known to be minimax optimal under single-policy concentrability assumptions~\citep{li2024settling}.
Beyond the single-agent setting, recent work has extended offline learning to multi-agent and game-theoretic environments, where equilibrium strategies must be learned from a fixed dataset.
Compared to single-agent offline RL, the central challenge in these settings arises from \emph{strategic} distribution shift induced by unilateral deviations of individual agents.
As a result, provably efficient learning generally requires stronger coverage conditions, formalized through notions such as \emph{unilateral concentrability}, which has been shown to be necessary in the worst case~\citep{cui2022offline}.
Existing approaches in multi-agent offline learning largely build on pessimism-based techniques, constructing conservative value estimates to control distribution shift~\citep{cui2022offline,zhong2022pessimistic,cui2022provably,zhang2023offline}.

\paragraph{KL-Regularized Objectives.}
KL regularization with respect to a reference policy has become a central tool for incorporating prior knowledge and enforcing behavioral constraints in reinforcement learning~\citep{xiong2023iterative,munos2024nash}.
By penalizing deviations from a reference distribution, such as a pre-trained model or human demonstrations, KL regularization reduces policy drift and stabilizes learning.
A prominent example is Reinforcement Learning from Human Feedback (RLHF), where a KL penalty is added to the reward to prevent the aligned model from deviating excessively from the pre-trained language model~\citep{ouyang2022training}.
Related KL-regularized formulations have been analyzed in both single-agent and multi-agent settings~\citep{xiong2023iterative,xie2024exploratory,zhao2025logarithmic,munos2024nash,ye2024online,zhang2025improving,zhang2025iterative,nayak2025achieving, chen2026Potential, chen2026pessimism}, where KL regularization is shown to limit policy deviation and improve stability of learning dynamics~\citep{ye2024online,nayak2025achieving}.
However, despite their empirical success, existing theoretical analyses of KL-regularized methods typically yield convergence or sample complexity rates that match those of their unregularized counterparts~\citep{ye2024online}.
Thus, it remains unclear whether KL regularization can lead to improved statistical rates in offline learning.

\paragraph{Learning in Games.}
Learning and equilibrium computation in static games, such as normal-form and contextual games, are classical topics in game theory. In particular, two-player zero-sum games have long served as a fundamental model, dating back to the seminal work of \citet{v1928theorie}.
\citet{freund1999adaptive} established a connection between no-regret online learning and equilibrium computation in zero-sum games, motivating a large body of subsequent work on using no-regret learning to solve stationary games
\citep{rakhlin2013optimization,daskalakis2011near,syrgkanis2015fast,chen2020hedging}.
These approaches show that when players repeatedly interact and update their strategies using no-regret algorithms such as multiplicative weights or mirror descent, the average play converges to Nash equilibria in zero-sum games and to coarse correlated equilibria more generally~\citep{roughgarden2016twenty}.
However, this line of work typically studies equilibrium computation through interactive no-regret learning and does not aim to provide statistical sample complexity guarantees for learning equilibria from a fixed offline dataset.

\section{Conclusion}\label{sec:conclusion}
In this paper, we develop a pessimism-free framework for offline learning in KL-regularized zero-sum games, showing that the geometry induced by KL regularization yields a fast $\widetilde{\mathcal O}(1/n)$ sample complexity bound under unilateral coverage.
We further provide a computationally efficient self-play algorithm that approximates the empirical equilibrium with a provable last-iterate optimization guarantee.
An important direction for future work is to extend this framework beyond zero-sum settings to more general game formulations.

\bibliography{ref}

@misc{chen2026pessimism,
      title={Pessimism-Free Offline Learning in General-Sum Games via KL Regularization}, 
      author={Claire Chen and Yuheng Zhang},
      year={2026},
      eprint={2605.00264},
      archivePrefix={arXiv},
      primaryClass={cs.LG},
      url={https://arxiv.org/abs/2605.00264}, 
}

@misc{chen2026Potential,
      title={Fast Rates in $\alpha$-Potential Games via Regularized Mirror Descent}, 
      author={Claire Chen and Yuheng Zhang},
      year={2026},
      eprint={2605.00268},
      archivePrefix={arXiv},
      primaryClass={cs.GT},
      url={https://arxiv.org/abs/2605.00268}, 
}

@article{nemirovski2004prox,
  title={Prox-method with rate of convergence O (1/t) for variational inequalities with Lipschitz continuous monotone operators and smooth convex-concave saddle point problems},
  author={Nemirovski, Arkadi},
  journal={SIAM Journal on Optimization},
  volume={15},
  number={1},
  pages={229--251},
  year={2004},
  publisher={SIAM}
}

@book{cover1999elements,
  title={Elements of information theory},
  author={Cover, Thomas M},
  year={1999},
  publisher={John Wiley \& Sons}
}

@book{cesa2006prediction,
  title={Prediction, learning, and games},
  author={Cesa-Bianchi, Nicolo and Lugosi, G{\'a}bor},
  year={2006},
  publisher={Cambridge university press}
}

@article{chen2020hedging,
  title={Hedging in games: Faster convergence of external and swap regrets},
  author={Chen, Xi and Peng, Binghui},
  journal={Advances in Neural Information Processing Systems},
  volume={33},
  pages={18990--18999},
  year={2020}
}

@article{syrgkanis2015fast,
  title={Fast convergence of regularized learning in games},
  author={Syrgkanis, Vasilis and Agarwal, Alekh and Luo, Haipeng and Schapire, Robert E},
  journal={Advances in Neural Information Processing Systems},
  volume={28},
  year={2015}
}

@inproceedings{daskalakis2011near,
  title={Near-optimal no-regret algorithms for zero-sum games},
  author={Daskalakis, Constantinos and Deckelbaum, Alan and Kim, Anthony},
  booktitle={Proceedings of the twenty-second annual ACM-SIAM symposium on Discrete Algorithms},
  pages={235--254},
  year={2011},
  organization={SIAM}
}

@article{rakhlin2013optimization,
  title={Optimization, learning, and games with predictable sequences},
  author={Rakhlin, Sasha and Sridharan, Karthik},
  journal={Advances in Neural Information Processing Systems},
  volume={26},
  year={2013}
}

@article{v1928theorie,
  title={Zur theorie der gesellschaftsspiele},
  author={v. Neumann, J},
  journal={Mathematische annalen},
  volume={100},
  number={1},
  pages={295--320},
  year={1928},
  publisher={Springer}
}

@book{roughgarden2016twenty,
  title={Twenty lectures on algorithmic game theory},
  author={Roughgarden, Tim},
  year={2016},
  publisher={Cambridge University Press}
}

@article{freund1999adaptive,
  title={Adaptive game playing using multiplicative weights},
  author={Freund, Yoav and Schapire, Robert E},
  journal={Games and Economic Behavior},
  volume={29},
  number={1-2},
  pages={79--103},
  year={1999},
  publisher={Elsevier}
}

@inproceedings{
zhang2025iterative,
title={Iterative Nash Policy Optimization: Aligning {LLM}s with General Preferences via No-Regret Learning},
author={Yuheng Zhang and Dian Yu and Baolin Peng and Linfeng Song and Ye Tian and Mingyue Huo and Nan Jiang and Haitao Mi and Dong Yu},
booktitle={The Thirteenth International Conference on Learning Representations},
year={2025}
}

@inproceedings{zhong2022pessimistic,
  title={Pessimistic minimax value iteration: Provably efficient equilibrium learning from offline datasets},
  author={Zhong, Han and Xiong, Wei and Tan, Jiyuan and Wang, Liwei and Zhang, Tong and Wang, Zhaoran and Yang, Zhuoran},
  booktitle={International Conference on Machine Learning},
  pages={27117--27142},
  year={2022},
  organization={PMLR}
}

@article{cui2022offline,
  title={When are offline two-player zero-sum Markov games solvable?},
  author={Cui, Qiwen and Du, Simon S},
  journal={Advances in Neural Information Processing Systems},
  volume={35},
  pages={25779--25791},
  year={2022}
}

@inproceedings{jin2021pessimism,
  title={Is pessimism provably efficient for offline rl?},
  author={Jin, Ying and Yang, Zhuoran and Wang, Zhaoran},
  booktitle={International conference on machine learning},
  pages={5084--5096},
  year={2021},
  organization={PMLR}
}

@article{xie2021bellman,
  title={Bellman-consistent pessimism for offline reinforcement learning},
  author={Xie, Tengyang and Cheng, Ching-An and Jiang, Nan and Mineiro, Paul and Agarwal, Alekh},
  journal={Advances in neural information processing systems},
  volume={34},
  pages={6683--6694},
  year={2021}
}

@inproceedings{zhang2023offline,
  title={Offline learning in markov games with general function approximation},
  author={Zhang, Yuheng and Bai, Yu and Jiang, Nan},
  booktitle={International Conference on Machine Learning},
  pages={40804--40829},
  year={2023},
  organization={PMLR}
}

@inproceedings{xie2021batch,
    author = "Xie, Tengyang and Jiang, Nan",
    year = "2021",
    booktitle = "Proceedings of the International Conference on Machine Learning",
    title = "Batch value-function approximation with only realizability"
}

@article{levine2020offline,
    author = "Levine, Sergey and Kumar, Aviral and Tucker, George and Fu, Justin",
    year = "2020",
    journal = "ArXiv Preprint",
    title = "Offline Reinforcement Learning: Tutorial, Review, and Perspectives on Open Problems"
}

@misc{ye2024online,
    author = "Ye, Chenlu and Xiong, Wei and Zhang, Yuheng and Jiang, Nan and Zhang, Tong",
    title = "Online Iterative Reinforcement Learning from Human Feedback with General Preference Model",
    year = "2024",
    archiveprefix = "ArXiv",
    eprint = "2402.07314",
    primaryclass = "cs.LG"
}

@article{komorowski2018artificial,
    author = "Komorowski, Matthieu and Celi, Leo A and Badawi, Omar and Gordon, Anthony C and Faisal, A Aldo",
    title = "The artificial intelligence clinician learns optimal treatment strategies for sepsis in intensive care",
    year = "2018",
    journal = "Nature Medicine"
}

@inproceedings{ouyang2022training,
    author = "Ouyang, Long and Wu, Jeffrey and Jiang, Xu and Almeida, Diogo and Wainwright, Carroll and Mishkin, Pamela and Zhang, Chong and Agarwal, Sandhini and Slama, Katarina and Ray, Alex and others",
    year = "2022",
    booktitle = "Advances in Neural Information Processing Systems",
    title = "Training language models to follow instructions with human feedback"
}

@article{cui2022provably,
  title={Provably efficient offline multi-agent reinforcement learning via strategy-wise bonus},
  author={Cui, Qiwen and Du, Simon S},
  journal={Advances in Neural Information Processing Systems},
  volume={35},
  pages={11739--11751},
  year={2022}
}

@article{uehara2021pessimistic,
  title={Pessimistic model-based offline reinforcement learning under partial coverage},
  author={Uehara, Masatoshi and Sun, Wen},
  journal={arXiv preprint arXiv:2107.06226},
  year={2021}
}

@article{li2024settling,
  title={Settling the sample complexity of model-based offline reinforcement learning},
  author={Li, Gen and Shi, Laixi and Chen, Yuxin and Chi, Yuejie and Wei, Yuting},
  journal={The Annals of Statistics},
  volume={52},
  number={1},
  pages={233--260},
  year={2024},
  publisher={Institute of Mathematical Statistics}
}

@article{liu2020provably,
  title={Provably good batch off-policy reinforcement learning without great exploration},
  author={Liu, Yao and Swaminathan, Adith and Agarwal, Alekh and Brunskill, Emma},
  journal={Advances in neural information processing systems},
  volume={33},
  pages={1264--1274},
  year={2020}
}

@article{rashidinejad2021bridging,
  title={Bridging offline reinforcement learning and imitation learning: A tale of pessimism},
  author={Rashidinejad, Paria and Zhu, Banghua and Ma, Cong and Jiao, Jiantao and Russell, Stuart},
  journal={Advances in Neural Information Processing Systems},
  volume={34},
  pages={11702--11716},
  year={2021}
}

@inproceedings{zhan2022offline,
  title={Offline reinforcement learning with realizability and single-policy concentrability},
  author={Zhan, Wenhao and Huang, Baihe and Huang, Audrey and Jiang, Nan and Lee, Jason},
  booktitle={Conference on Learning Theory},
  pages={2730--2775},
  year={2022},
  organization={PMLR}
}

@article{xiong2023iterative,
  title={Iterative preference learning from human feedback: Bridging theory and practice for rlhf under kl-constraint},
  author={Xiong, Wei and Dong, Hanze and Ye, Chenlu and Wang, Ziqi and Zhong, Han and Ji, Heng and Jiang, Nan and Zhang, Tong},
  journal={arXiv preprint arXiv:2312.11456},
  year={2023}
}

@article{xie2024exploratory,
  title={Exploratory preference optimization: Harnessing implicit q*-approximation for sample-efficient rlhf},
  author={Xie, Tengyang and Foster, Dylan J and Krishnamurthy, Akshay and Rosset, Corby and Awadallah, Ahmed and Rakhlin, Alexander},
  journal={arXiv preprint arXiv:2405.21046},
  year={2024}
}

@article{zhao2025logarithmic,
  title={Logarithmic regret for online kl-regularized reinforcement learning},
  author={Zhao, Heyang and Ye, Chenlu and Xiong, Wei and Gu, Quanquan and Zhang, Tong},
  journal={arXiv preprint arXiv:2502.07460},
  year={2025}
}

@inproceedings{munos2024nash,
  title={Nash learning from human feedback},
  author={Munos, R{\'e}mi and Valko, Michal and Calandriello, Daniele and Azar, Mohammad Gheshlaghi and Rowland, Mark and Guo, Zhaohan Daniel and Tang, Yunhao and Geist, Matthieu and Mesnard, Thomas and Fiegel, Come and others},
  booktitle={Forty-first International Conference on Machine Learning},
  year={2024}
}

@article{zhang2025improving,
  title={Improving LLM general preference alignment via optimistic online mirror descent},
  author={Zhang, Yuheng and Yu, Dian and Ge, Tao and Song, Linfeng and Zeng, Zhichen and Mi, Haitao and Jiang, Nan and Yu, Dong},
  journal={arXiv preprint arXiv:2502.16852},
  year={2025}
}

@article{nayak2025achieving,
  title={Achieving Logarithmic Regret in KL-Regularized Zero-Sum Markov Games},
  author={Nayak, Anupam and Yang, Tong and Yagan, Osman and Joshi, Gauri and Chi, Yuejie},
  journal={arXiv preprint arXiv:2510.13060},
  year={2025}
}

@article{zhao2025towards,
  title={Towards a Sharp Analysis of Offline Policy Learning for $ f $-Divergence-Regularized Contextual Bandits},
  author={Zhao, Qingyue and Ji, Kaixuan and Zhao, Heyang and Zhang, Tong and Gu, Quanquan},
  journal={arXiv preprint arXiv:2502.06051},
  year={2025}
}

@article{lee2023offline,
  title={Offline reinforcement learning for automated stock trading},
  author={Lee, Namyeong and Moon, Jun},
  journal={IEEE Access},
  volume={11},
  pages={112577--112589},
  year={2023},
  publisher={IEEE}
}

@article{shi2021offline,
  title={Offline reinforcement learning for autonomous driving with safety and exploration enhancement},
  author={Shi, Tianyu and Chen, Dong and Chen, Kaian and Li, Zhaojian},
  journal={arXiv preprint arXiv:2110.07067},
  year={2021}
}

\newpage
\appendix
\onecolumn

\section{Concentration Analysis for Payoff Estimation}\label{app:proof_conc}

In this section, we provide the concentration analysis for the least-squares payoff estimator $\widehat{g}$ defined in Algorithm~\ref{alg:minimax} and Algorithm~\ref{alg:self_play}. We establish a fast $\mathcal{O}(1/n)$ convergence rate for the mean-squared error under the data generating distribution.

\begin{lemma}\label{lem:gen}
For any policies $\pi_1,\pi_2$, let $\{(z_i)\}_{i=1}^n = \{(x_i, a_{i,1},a_{i,2})\}_{i=1}^n$ be samples generated i.i.d. from $\mu = \rho \times \pi_1 \times \pi_2$.
With probability at least $1-\delta$, for all $g_1, g_2 \in \mathcal{G}$, we have
\begin{align}
    \mathbb{E}_{z \sim \mu}\big[\big(g_1(z) - g_2(z)\big)^2 \big] \leq \frac{2}{n}\sum_{i=1}^n \big(g_1(z_i) - g_2(z_i)\big)^2 + \frac{80}{3n}\log(2|\mathcal{G}|/\delta). \nonumber
\end{align}
\end{lemma}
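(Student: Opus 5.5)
This is a Bernstein inequality for nonnegative bounded variables, followed by a union bound over pairs in $\Gcal$. Fix a pair $g_1,g_2\in\Gcal$ and set $Y_i := (g_1(z_i)-g_2(z_i))^2$. Since $g_1,g_2$ take values in $[-1,1]$, we have $Y_i\in[0,4]$, and the $Y_i$ are i.i.d. under $\mu$. Write $m:=\E_{z\sim\mu}[Y]$. The key self-bounding property is
\[
\mathrm{Var}(Y)\le \E[Y^2]\le 4\,\E[Y]=4m .
\]
This is what yields the fast rate rather than a $1/\sqrt n$ rate.

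Apply the one-sided Bernstein inequality to the centered variables $m-Y_i$, which are bounded above by $4$. With probability at least $1-\delta'$,
\[
m-\frac1n\sum_{i=1}^n Y_i\le \sqrt{\frac{2\cdot 4m\log(1/\delta')}{n}}+\frac{2\cdot 4\log(1/\delta')}{3n}.
\]
Here the range constant $b=4$ gives the term $\frac{2b}{3n}\log(1/\delta')$. Next, use AM-GM on the square-root term:
\[
\sqrt{8m L/n}\le \tfrac{m}{2}+\tfrac{4L}{n},\qquad L:=\log(1/\delta').
\]
Substituting and rearranging gives
\[
\tfrac m2\le \frac1n\sum_i Y_i+\Big(4+\tfrac83\Big)\frac{L}{n},
\]
that is,
\[
m\le \frac2n\sum_i Y_i+\frac{40}{3n}L .
\]

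Finally, take a union bound over all ordered pairs $(g_1,g_2)\in\Gcal^2$ by setting $\delta'=\delta/|\Gcal|^2$. Then
\[
L=2\log(|\Gcal|/\delta)\le 2\log(2|\Gcal|/\delta),
\]
which gives the constant $\frac{80}{3n}\log(2|\Gcal|/\delta)$ exactly as stated. The factor $2$ inside the log leaves slack, for instance to accommodate a two-sided Bernstein bound if one prefers it.

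The main obstacle is bookkeeping rather than substance. One has to use the variance bound $\mathrm{Var}(Y)\le 4m$, not the crude bound $16$, and choose the AM-GM split so that the coefficient on $m$ is exactly $1/2$. Together these produce the factor $2$ on the empirical term and the $80/3$ constant. Note also that the lemma holds for an arbitrary product distribution $\mu$, and it will later be applied with $\mu=\rho\times\pi_b$.
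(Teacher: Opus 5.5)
Your proposal is correct and matches the paper's proof essentially step for step. Both use Bernstein with the self-bounding variance bound $\mathrm{Var}(Y)\le 4\,\E[Y]$, an AM-GM split giving coefficient $1/2$ on the mean (which yields the $\frac{40}{3n}$ constant), and a union bound over $|\Gcal|^2$ pairs. One small slip: with $\delta'=\delta/|\Gcal|^2$ you have $L=\log(|\Gcal|^2/\delta)\le 2\log(|\Gcal|/\delta)$ (using $\delta\le 1$), which is an inequality rather than the equality you wrote, and this does not affect the conclusion.
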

\begin{proof}
Let $X(z) = (g_1(z) - g_2(z))^2$. Since $g_1, g_2 \in [-1, 1]$, we have $X(z) \in [0, 4]$. We apply Bernstein's inequality. For a fixed pair $g_1, g_2$, with probability at least $1 - \delta'$, we have:
\begin{align}
    \mathbb{E}[X] - \frac{1}{n}\sum_{i=1}^n X(z_i) \leq \sqrt{\frac{2 \text{Var}(X) \log(1/\delta')}{n}} + \frac{8\log(1/\delta')}{3n}. \nonumber
\end{align}
Since $X(z) \in [0, 4]$, we have $\text{Var}(X) \leq \mathbb{E}[X^2] \leq 4 \mathbb{E}[X]$. Substituting this variance bound:
\begin{align}
    \mathbb{E}[X] - \frac{1}{n}\sum_{i=1}^n X(z_i) \leq \sqrt{\frac{8 \mathbb{E}[X] \log(1/\delta')}{n}} + \frac{8 \log(1/\delta')}{3n}. \nonumber
\end{align}
Using the AM-GM inequality $\sqrt{xy} \leq \frac{x}{2} + \frac{y}{2}$ with $x = \mathbb{E}[X]$ and $y = \frac{8 \log(1/\delta')}{n}$, we bound the square root term:
\begin{align}
    \sqrt{\frac{8 \mathbb{E}[X] \log(1/\delta')}{n}} \leq \frac{1}{2}\mathbb{E}[X] + \frac{4 \log(1/\delta')}{n}. \nonumber
\end{align}
Plugging this back in:
\begin{align}
    \mathbb{E}[X] - \frac{1}{n}\sum_{i=1}^n X(z_i) &\leq \frac{1}{2}\mathbb{E}[X] + \frac{4 \log(1/\delta')}{n} + \frac{8 \log(1/\delta')}{3n} \nonumber \\
    \implies \frac{1}{2}\mathbb{E}[X] &\leq \frac{1}{n}\sum_{i=1}^n X(z_i) + \frac{20 \log(1/\delta')}{3n}. \nonumber
\end{align}
Multiplying by 2:
\begin{align}
    \mathbb{E}[X] \leq \frac{2}{n}\sum_{i=1}^n X(z_i) + \frac{40 \log(1/\delta')}{3n}. \nonumber
\end{align}
Finally, we apply a union bound over all pairs $(g_1, g_2) \in \mathcal{G} \times \mathcal{G}$. Setting $\delta' = \delta / |\mathcal{G}|^2$, we have:
\begin{align}
    \log(1/\delta') = \log(|\mathcal{G}|^2/\delta) \leq 2\log(2|\mathcal{G}|/\delta). \nonumber
\end{align}
Substituting this into the bound yields:
\begin{align}
    \mathbb{E}_{\mu}\big[\big(g_1 - g_2\big)^2 \big] \leq \frac{2}{n}\sum_{i=1}^n \big(g_1(z_i) - g_2(z_i)\big)^2 + \frac{80}{3n}\log(2|\mathcal{G}|/\delta). \nonumber
\end{align}
This holds for all $g_1, g_2 \in \mathcal{G}$ simultaneously with probability at least $1-\delta$.
\end{proof}

\begin{lemma}[In-sample Error Bound]\label{lem:mle_finite}
Under Assumption~\ref{asm:realize}, let $\widehat{g}$ be the least squares estimator defined in Algorithm~\ref{alg:minimax}. With probability at least $1-\delta$,
\[
\sum_{i=1}^n \big(\widehat{g}(z_i) - g^\star(z_i)\big)^2
\;\le\;
8 \log\!\left(\frac{|\mathcal{G}|}{\delta}\right).
\]
\end{lemma}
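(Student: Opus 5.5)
The argument is the standard "basic inequality" for least squares under realizability, followed by a martingale/sub-Gaussian concentration bound and a union bound over the finite class $\Gcal$. Write $z_i=(x_i,a_{i,1},a_{i,2})$ and $p_i=g^\star(z_i)+\epsilon_i$.

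\textbf{Step 1: basic inequality.} Since $\wh g$ minimizes the empirical squared loss and $g^\star\in\Gcal$ by Assumption~\ref{asm:realize},
\[
\sum_{i=1}^n\big(\wh g(z_i)-p_i\big)^2\le\sum_{i=1}^n\big(g^\star(z_i)-p_i\big)^2 .
\]
Expanding $(\wh g(z_i)-p_i)^2=(\wh g(z_i)-g^\star(z_i)-\epsilon_i)^2$ and cancelling $\sum_i\epsilon_i^2$, this becomes
\[
\sum_{i=1}^n\Delta_{\wh g}(z_i)^2\le 2\sum_{i=1}^n\epsilon_i\,\Delta_{\wh g}(z_i),
\qquad \Delta_g:=g-g^\star .
\]

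\textbf{Step 2: concentration of the cross term, uniformly over $\Gcal$.} Fix $g\in\Gcal$ and let $S_g=\sum_i\epsilon_i\Delta_g(z_i)$ and $V_g=\sum_i\Delta_g(z_i)^2$. I will treat the $\epsilon_i$ as conditionally $1$-sub-Gaussian given $z_{1:i}$ and the past noise. Then the process $\exp\!\big(\lambda S_g-\tfrac{\lambda^2}{2}V_g\big)$ is a supermartingale with expectation at most $1$.

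Choosing $\lambda=1/4$ and applying Markov's inequality gives, with probability at least $1-\delta$ and simultaneously for all $g$ after a union bound over $|\Gcal|$,
\[
S_g\le \tfrac{1}{8}V_g+4\log(|\Gcal|/\delta).
\]

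\textbf{Step 3: combine.} Applying Step 2 to $g=\wh g$ (valid because the bound is uniform over $\Gcal$) and plugging into Step 1:
\[
V_{\wh g}\le 2S_{\wh g}\le \tfrac14 V_{\wh g}+8\log(|\Gcal|/\delta),
\]
so $V_{\wh g}\le \tfrac{32}{3}\log(|\Gcal|/\delta)$.

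This constant is slightly off from the stated $8$. To land on $8$ exactly, I would optimize $\lambda$ or use the sharper one-sided form $S_g\le \tfrac{\lambda}{2}V_g+\tfrac1\lambda\log(|\Gcal|/\delta)$. With $\lambda=1/2$ this gives $V\le \tfrac12 V+4\log$, hence $V\le 8\log(|\Gcal|/\delta)$, which matches the statement.

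\textbf{Main obstacle.} The only delicate point is that $\wh g$ is data-dependent, so Step 2 cannot be applied to a fixed function. This is why the bound is established uniformly over $\Gcal$ via the union bound. The sub-Gaussian exponential-supermartingale argument also needs a measurability and filtration setup in which $\Delta_g(z_i)$ is predictable with respect to $\epsilon_i$. This holds because the $z_i$ are i.i.d. and the noise is sub-Gaussian conditional on $z_i$.
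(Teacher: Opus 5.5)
Your proposal is correct and follows the paper's proof exactly: the basic inequality, the sub-Gaussian exponential-moment bound $\E[\exp(\tfrac12 S_g-\tfrac18 V_g)]\le 1$ (i.e.\ $\lambda=1/2$, which is also the optimal choice, since your argument gives $V_{\wh g}\le \tfrac{2}{\lambda(1-\lambda)}\log(|\Gcal|/\delta)$), and a union bound over $\Gcal$, which together give $V_{\wh g}\le 8\log(|\Gcal|/\delta)$. One thing to tidy up: your ``sharper one-sided form'' is the same inequality you used in Step~2, just with $\lambda=1/2$ in place of $1/4$, so you can drop the $\lambda=1/4$ detour and the $32/3$ constant and start directly from $\lambda=1/2$.
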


\begin{proof}
By the optimality of $\widehat{g}$ and the realizability $g^\star \in \mathcal{G}$, we have $\sum_i (\widehat{g}(z_i) - p_i)^2 \le \sum_i (g^\star(z_i) - p_i)^2$. Substituting $p_i = g^\star(z_i) + \epsilon_i$ yields the inequality:
\begin{align}
\sum_{i=1}^n \big(\widehat{g}(z_i) - g^\star(z_i)\big)^2
\;\le\;
2 \sum_{i=1}^n \epsilon_i \big(\widehat{g}(z_i) - g^\star(z_i)\big).
\label{eq:basic_ineq}
\end{align}
Fix any $g \in \mathcal{G}$ and let $f_i = g(z_i) - g^\star(z_i)$. Since $\epsilon_i$ is $1$-sub-Gaussian, the exponential moment satisfies
$\mathbb{E}[\exp(\frac{1}{2} \sum_i \epsilon_i f_i - \frac{1}{8} \sum_i f_i^2)] \le 1$.
Applying the Chernoff method and a union bound over $\mathcal{G}$, we have that with probability at least $1-\delta$, for all $g \in \mathcal{G}$:
\begin{align}
2 \sum_{i=1}^n \epsilon_i \big(g(z_i) - g^\star(z_i)\big)
\;\le\;
\frac{1}{2} \sum_{i=1}^n \big(g(z_i) - g^\star(z_i)\big)^2
+ 4 \log\!\left(\frac{|\mathcal{G}|}{\delta}\right). \nonumber
\end{align}
Plugging this bound for $\widehat{g}$ back into Eq.~\eqref{eq:basic_ineq} and rearranging terms concludes the proof.
\end{proof}

\begin{corollary}[Fast Rate for Payoff Estimation]\label{cor:fast_rate}
Under Assumption~\ref{asm:realize}, let $\mu = \rho \times \pi_b$ be the data generating distribution. With probability at least $1-\delta$,
\begin{align}
    \mathbb{E}_{z \sim \mu}\big[\big(\widehat{g}(z) - g^\star(z)\big)^2 \big] \leq \mathcal{O}\!\left( \frac{\log(|\mathcal{G}|/\delta)}{n} \right). \nonumber
\end{align}
\end{corollary}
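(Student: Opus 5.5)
The plan is to combine the two preceding results, Lemma~\ref{lem:gen} and Lemma~\ref{lem:mle_finite}, each at confidence level $\delta/2$, and finish with a union bound. Lemma~\ref{lem:gen} is stated for i.i.d.\ samples from $\rho\times\pi_1\times\pi_2$. Our dataset tuples $z_i=(x_i,a_{i,1},a_{i,2})$ are drawn i.i.d.\ from exactly $\mu=\rho\times\pi_b$, so we apply it with $(\pi_1,\pi_2)=(\pi_{b,1},\pi_{b,2})$. The noise $p_i$ does not enter Lemma~\ref{lem:gen}, which only concerns the covariates, so there is no dependence issue. The key feature is that Lemma~\ref{lem:gen} holds uniformly over all pairs $g_1,g_2\in\Gcal$. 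We may therefore instantiate it at the data-dependent pair $(g_1,g_2)=(\wh g,g^\star)$, using $\wh g\in\Gcal$ by construction and $g^\star\in\Gcal$ by Assumption~\ref{asm:realize}.

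Concretely, on an event of probability at least $1-\delta/2$,
\[
\E_{z\sim\mu}\big[(\wh g(z)-g^\star(z))^2\big]\le \frac{2}{n}\sum_{i=1}^n(\wh g(z_i)-g^\star(z_i))^2+\frac{80}{3n}\log(4|\Gcal|/\delta).
\]
On a second event of probability at least $1-\delta/2$, Lemma~\ref{lem:mle_finite} gives
\[
\sum_{i=1}^n(\wh g(z_i)-g^\star(z_i))^2\le 8\log(2|\Gcal|/\delta).
\]
Substituting the second bound into the first, on the intersection of the two events (probability at least $1-\delta$) we obtain
\[
\E_\mu[(\wh g-g^\star)^2]\le \frac{16\log(2|\Gcal|/\delta)+\tfrac{80}{3}\log(4|\Gcal|/\delta)}{n}.
\]
Since $\log(4|\Gcal|/\delta)\le 2\log(|\Gcal|/\delta)+\log 4$, this is $\Ocal(\log(|\Gcal|/\delta)/n)$ whenever $|\Gcal|\ge 2$ or $\delta$ is bounded away from $1$. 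In the degenerate case $|\Gcal|=1$ we have $\wh g=g^\star$ and the bound is trivial.

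There is no real obstacle here; the only points requiring care are the following.
\begin{itemize}
\item Because $\wh g$ depends on the data, we must invoke the uniform-over-$\Gcal\times\Gcal$ form of Lemma~\ref{lem:gen} rather than a fixed-function Bernstein bound.
\item The confidence parameter must be split between the two lemmas, with the constants absorbed into the $\Ocal(\cdot)$.
\end{itemize}
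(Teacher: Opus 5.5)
Your proposal is correct and follows the paper's proof exactly: you instantiate the uniform-over-$\Gcal\times\Gcal$ bound of Lemma~\ref{lem:gen} at $(\wh g, g^\star)$ and plug in the in-sample bound of Lemma~\ref{lem:mle_finite}. Your explicit $\delta/2$ split and union bound is slightly more careful than the paper, which combines the two high-probability events without comment and absorbs the resulting constants into the $\Ocal(\cdot)$.
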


\begin{proof}
Applying Lemma~\ref{lem:gen} with $g_1 = \widehat{g}$ and $g_2 = g^\star$, we have
\begin{align}
    \mathbb{E}_{\mu}\big[\big(\widehat{g} - g^\star\big)^2 \big] \leq \frac{2}{n}\sum_{i=1}^n \big(\widehat{g}(z_i) - g^\star(z_i)\big)^2 + \frac{80}{3n}\log(2|\mathcal{G}|/\delta). \nonumber
\end{align}
Applying Lemma~\ref{lem:mle_finite} to bound the summation term:
\begin{align}
    \frac{2}{n}\sum_{i=1}^n \big(\widehat{g}(z_i) - g^\star(z_i)\big)^2 \le \frac{16}{n} \log\!\left(\frac{|\mathcal{G}|}{\delta}\right). \nonumber
\end{align}
Combining these yields the result.
\end{proof}
\section{Proofs for Section~\ref{sec:info}}\label{app:proof_minimax}

In this section, we provide the full proofs for the sample complexity results of the pessimism-free minimax algorithm presented in Section~\ref{sec:info}.

\subsection{Proof of Lemma~\ref{lem:kl_subopt}}

\begin{proof}
We prove the result for Player 1; the case for Player 2 follows symmetrically. Fix a context $x$ and an opponent policy $\pi_2$. Let $h(a_1) = \mathbb{E}_{a_2 \sim \pi_2(\cdot|x)}[g^\star(x, a_1, a_2)]$. The local objective for Player 1 is $L(\pi_1) = \langle h, \pi_1 \rangle - \eta^{-1} \KL(\pi_1 \| \piref)$.

The best response $\pi_1^\dagger(\cdot|x)$ maximizes $L(\pi_1)$. The first-order optimality condition implies that for all $a_1$:
\begin{align}\label{eq:foc_concise}
h(a_1) = \eta^{-1} \left( \log \frac{\pi_1^\dagger(a_1|x)}{\piref(a_1|x)} + C \right),
\end{align}
where $C$ is a normalization constant.
Consider the suboptimality gap for any policy $\pi_1$:
\begin{align*}
L(\pi_1^\dagger) - L(\pi_1)
&= \langle h, \pi_1^\dagger - \pi_1 \rangle - \eta^{-1} \big( \KL(\pi_1^\dagger \| \piref) - \KL(\pi_1 \| \piref) \big).
\end{align*}
Substituting $h(a_1)$ from Eq.~\eqref{eq:foc_concise}, the term involving $C$ vanishes since $\langle C \mathbf{1}, \pi_1^\dagger - \pi_1 \rangle = 0$. The remaining terms simplify as:
\begin{align*}
L(\pi_1^\dagger) - L(\pi_1)
&= \eta^{-1} \sum_{a_1} (\pi_1^\dagger(a_1) - \pi_1(a_1)) \log \frac{\pi_1^\dagger(a_1)}{\piref(a_1)} \\
&\quad - \eta^{-1} \sum_{a_1} \left( \pi_1^\dagger(a_1) \log \frac{\pi_1^\dagger(a_1)}{\piref(a_1)} - \pi_1(a_1) \log \frac{\pi_1(a_1)}{\piref(a_1)} \right) \\
&= \eta^{-1} \sum_{a_1} \pi_1(a_1) \left( \log \frac{\pi_1(a_1)}{\piref(a_1)} - \log \frac{\pi_1^\dagger(a_1)}{\piref(a_1)} \right) \\
&= \eta^{-1} \KL(\pi_1(\cdot|x) \| \pi_1^\dagger(\cdot|x)).
\end{align*}
Aggregating over $x \sim \rho$ yields the result.
\end{proof}

\subsection{Proof of Lemma~\ref{lem:kl_logit}}
\begin{proof}
Let $\Phi(z) = \log \sum_{a} \exp(z_a)$ be the log-sum-exp function. Its gradient is $\nabla \Phi(z) = \Softmax(z)$. To prove the smoothness properties, we analyze the Hessian $\nabla^2 \Phi(z)$.
The entries of the Hessian are given by $[\nabla^2 \Phi(z)]_{ij} = P(i)\delta_{ij} - P(i)P(j)$, where $P = \Softmax(z)$. For any vector $v \in \mathbb{R}^{|\mathcal{A}|}$, the quadratic form is:
\begin{align*}
    v^\top \nabla^2 \Phi(z) v
    &= \sum_i P(i) v_i^2 - \left(\sum_i P(i) v_i\right)^2 \\
    &= \text{Var}_{i \sim P}[v_i] \le \mathbb{E}_{i \sim P}[v_i^2] \le \|v\|_\infty^2.
\end{align*}
This bound $v^\top \nabla^2 \Phi(z) v \le \|v\|_\infty^2$ implies that $\Phi$ is $1$-smooth with respect to the $\ell_\infty$-norm. The Bregman divergence generated by $\Phi$ satisfies
\[
    D_\Phi(z', z) = \Phi(z') - \Phi(z) - \langle \nabla \Phi(z), z' - z \rangle = \KL(\Softmax(z) \| \Softmax(z')).
\]
Using the $1$-smoothness property ($\Phi(z') \le \Phi(z) + \langle \nabla \Phi(z), z' - z \rangle + \frac{1}{2}\|z' - z\|_\infty^2$), we obtain:
\[
    \KL(\Softmax(z) \| \Softmax(z')) = D_\Phi(z', z) \le \frac{1}{2} \|z - z'\|_\infty^2.
\]
This proves Lemma~\ref{lem:kl_logit}.
\end{proof}

\subsection{Proof of Lemma~\ref{lem:l1_stability}}
\begin{proof}
Fix a context $x$. Let $\pi_x = (\pi_1(\cdot|x), \pi_2(\cdot|x))$ be the joint policy. We define the gradient operator $F_{g,x}(\pi_x)$ associated with the zero-sum game with payoff $g$:
\[
F_{g,x}(\pi_x) \coloneqq \begin{pmatrix} -\nabla_{\pi_1} J_x(g, \pi_1, \pi_2) \\ \nabla_{\pi_2} J_x(g, \pi_1, \pi_2) \end{pmatrix}
= \underbrace{\begin{pmatrix} 0 & -G_g(x) \\ G_g(x)^\top & 0 \end{pmatrix}}_{M_{g,x}} \pi_x + \eta^{-1} \begin{pmatrix} \nabla \KL_1 \\ \nabla \KL_2 \end{pmatrix},
\]
where $G_g(x)$ is the payoff matrix at $x$, and $\nabla \KL_i$ denotes the gradient of $\KL(\pi_i\|\piref)$ w.r.t. $\pi_i$.

The Nash equilibria $\pi^\star_x$ and $\wh{\pi}_x$ satisfy the first-order optimality conditions $F_{g^\star,x}(\pi^\star_x) = 0$ and $F_{\wh{g},x}(\wh{\pi}_x) = 0$.
The operator $F_{g,x}$ is strongly monotone due to the $\eta^{-1}$-strong convexity of the KL regularization. Since the linear part $M_{g,x}$ is skew-symmetric (i.e., $\langle M_{g,x} v, v \rangle = 0$ for any vector $v$), we have:
\begin{align}\label{eq:strong_mono_app}
\langle F_{g,x}(\pi) - F_{g,x}(\pi'), \pi - \pi' \rangle
&= \eta^{-1} \sum_{i=1}^2 \langle \nabla \KL_i(\pi_i) - \nabla \KL_i(\pi'_i), \pi_i - \pi'_i \rangle \nonumber \\
&\ge \frac{1}{2\eta} \|\pi - \pi'\|_1^2. \nonumber
\end{align}
Applying this to $\wh{\pi}_x$ and $\pi^\star_x$ with the operator $F_{g^\star,x}$:
\begin{align*}
\frac{1}{2\eta} \|\wh{\pi}_x - \pi^\star_x\|_1^2
&\le \langle F_{g^\star,x}(\wh{\pi}_x) - F_{g^\star,x}(\pi^\star_x), \wh{\pi}_x - \pi^\star_x \rangle \\
&= \langle F_{g^\star,x}(\wh{\pi}_x), \wh{\pi}_x - \pi^\star_x \rangle \quad (\text{since } F_{g^\star,x}(\pi^\star_x)=0).
\end{align*}
Subtracting $F_{\wh{g},x}(\wh{\pi}_x)=0$ from the inner product, the regularization terms cancel, leaving only the payoff difference:
\begin{align*}
\langle F_{g^\star,x}(\wh{\pi}_x) - F_{\wh{g},x}(\wh{\pi}_x), \wh{\pi}_x - \pi^\star_x \rangle
&= \langle (M_{g^\star,x} - M_{\wh{g},x})\wh{\pi}_x, \wh{\pi}_x - \pi^\star_x \rangle.
\end{align*}
Let $\Delta M_x = M_{g^\star,x} - M_{\wh{g},x}$. We decompose $\wh{\pi}_x = \pi^\star_x + (\wh{\pi}_x - \pi^\star_x)$ and exploit the skew-symmetry of $\Delta M_x$:
\begin{align*}
\langle \Delta M_x \wh{\pi}_x, \wh{\pi}_x - \pi^\star_x \rangle
= \langle \Delta M_x \pi^\star_x, \wh{\pi}_x - \pi^\star_x \rangle + \underbrace{\langle \Delta M_x (\wh{\pi}_x - \pi^\star_x), \wh{\pi}_x - \pi^\star_x \rangle}_{=0}.
\end{align*}
Combining inequalities and applying Hölder's inequality:
\[
\frac{1}{2\eta} \|\wh{\pi}_x - \pi^\star_x\|_1^2 \le \langle \Delta M_x \pi^\star_x, \wh{\pi}_x - \pi^\star_x \rangle \le \|\Delta M_x \pi^\star_x\|_\infty \|\wh{\pi}_x - \pi^\star_x\|_1.
\]
Dividing by $\|\wh{\pi}_x - \pi^\star_x\|_1$ yields $\|\wh{\pi}_x - \pi^\star_x\|_1 \le 2\eta \|\Delta M_x \pi^\star_x\|_\infty$.
The term $\|\Delta M_x \pi^\star_x\|_\infty$ corresponds precisely to the unilateral estimation error $\mathcal{E}(x)$.
\end{proof}

\subsection{Proof of Theorem~\ref{thm:fast_rate_sc}}

\begin{proof}
We follow the roadmap outlined in Section~\ref{sec:info}. From the smoothness of the KL-regularized best response (Lemma~\ref{lem:kl_logit}), we have the bound for $\mathrm{Gap}_1$:
\begin{align}\label{eq:app_gap1}
\mathrm{Gap}_1 \le \frac{\eta}{2} \E_{x \sim \rho} \big[ \|\wh{\pi}_2(\cdot|x) - \pi_2^\star(\cdot|x)\|_1^2 \big].
\end{align}
For $\mathrm{Gap}_2$, using the decomposition of logits $\wh{z} - z^\star$ into a policy deviation term and a unilateral estimation error term, and applying the inequality $(a+b)^2 \le 2a^2 + 2b^2$ alongside Lemma~\ref{lem:kl_logit}, we obtain:
\begin{align}\label{eq:app_gap2}
\mathrm{Gap}_2 \le 2\eta \E_{x \sim \rho} \big[ \|\wh{\pi}_1(\cdot|x) - \pi_1^\star(\cdot|x)\|_1^2 \big] + 2\eta \E_{x \sim \rho} \big[ \Ecal_{\mathrm{uni}, 1}(x)^2 \big],
\end{align}
where $\Ecal_{\mathrm{uni}, 1}(x) = \max_{a_2} | \E_{a_1 \sim \pi_1^\star}[\wh{g}(x,a_1,a_2) - g^\star(x,a_1,a_2)] |$.

\paragraph{Stability and Error Unification.}
Summing Eq.~\eqref{eq:app_gap1} and Eq.~\eqref{eq:app_gap2}, the one-sided suboptimality is bounded by the joint policy distance and the unilateral estimation error:
\begin{align*}
\mathrm{Gap}_1 + \mathrm{Gap}_2 \le \frac{5\eta}{2} \E_{x \sim \rho} \big[ \|\wh{\pi}(\cdot|x) - \pi^\star(\cdot|x)\|_1^2 \big] + 2\eta \E_{x \sim \rho} \big[ \Ecal(x)^2 \big],
\end{align*}
where $\Ecal(x)$ is the unilateral error defined in Lemma~\ref{lem:l1_stability}.
Applying the stability bound from Lemma~\ref{lem:l1_stability}, which established $\|\wh{\pi}_x - \pi_x^\star\|_1 \le 2\eta \Ecal(x)$, we substitute the policy distance term to obtain:
\begin{align*}
\mathrm{Gap}_1 + \mathrm{Gap}_2 &\le \frac{5\eta}{2} \E_{x \sim \rho} \big[ (2\eta \Ecal(x))^2 \big] + 2\eta \E_{x \sim \rho} \big[ \Ecal(x)^2 \big] \\
&= (10\eta^3 + 2\eta) \E_{x \sim \rho} \big[ \Ecal(x)^2 \big].
\end{align*}

\paragraph{Concentrability and Statistical Bound.}
We now bound the expected squared unilateral error $\E_{x \sim \rho} [\Ecal(x)^2]$. By Jensen's inequality and the definition of the $\ell_\infty$ norm, we have:
\[
\Ecal(x)^2 \le \max_{i \in \{1,2\}} \max_{\pi_i \in \Pi} \E_{a_i \sim \pi_i, a_{-i} \sim \pi_{-i}^\star} \big[ (\wh{g}(x,a_1,a_2) - g^\star(x,a_1,a_2))^2 \big].
\]
Taking distribution over $x \sim \rho$ and applying Assumption~\ref{assm:uni_cover} yields:
\begin{align*}
\E_{x \sim \rho} [\Ecal(x)^2] &\le \max_{i \in \{1,2\}} \max_{\pi_i \in \Pi} \E_{\rho \times \pi_i \times \pi_{-i}^\star} [(\wh{g} - g^\star)^2] \\
&\le C_{\mathrm{uni}} \E_{\rho \times \pi_b} [(\wh{g} - g^\star)^2].
\end{align*}
Finally, we invoke the concentration result for the payoff estimator. Under Assumption~\ref{asm:realize}, Corollary~\ref{cor:fast_rate} ensures that with probability at least $1-\delta$:
\[
\E_{\rho \times \pi_b} [(\wh{g} - g^\star)^2] \le \mathcal{O}\!\left( \frac{\log(|\Gcal|/\delta)}{n} \right).
\]
Combining these steps, we conclude:
\[
\mathrm{Gap}_1 + \mathrm{Gap}_2 \le \mathcal{O}\!\left( \frac{(\eta + \eta^3) C_{\mathrm{uni}} \log(|\Gcal|/\delta)}{n} \right).
\]
This completes the proof, establishing an $\otil(1/n)$ sample complexity.
\end{proof}

\section{Proofs for Section~\ref{sec:self_play}}
\label{app:proof_self}

In this section, we provide the full proofs for the self-play policy optimization results presented in Section~\ref{sec:self_play}.

\subsection{Proof of Lemma~\ref{lem:log_linear_bounded}}
Fix an arbitrary context $x$ and let $\mathcal A_x :=\supp(\piref(\cdot|x))$ denote the set of actions supported by the reference policy at context $x$. For any vector $v\in\mathbb R^{\mathcal A_x}$, we write $\operatorname{osc}(v)
:=
\max_{a\in\mathcal A_x} v(a)
-
\min_{a\in\mathcal A_x} v(a)
$ for its oscillation over $\mathcal A_x$.

\begin{proof}
For each player $i\in\{1,2\}$, define the log-density ratio
\[
u_i^{(t)}(a)
:=
\log\frac{\pi_i^{(t)}(a|x)}{\piref(a|x)},
\qquad a\in\mathcal A_x .
\]
We show by induction that
$\operatorname{osc}(u_i^{(t)})\le 2\eta$ for all $t\ge 0$ and both players.
At initialization, $\pi_i^{(0)}(\cdot|x)=\piref(\cdot|x)$, so
$u_i^{(0)}\equiv 0$ and the claim holds. Consider Player~1. The update in Algorithm~\ref{alg:self_play} can be written as
\[
u_1^{(t+1)}(a)
=
w_1^{(t)}(a)
-
\log Z_1^{(t)}(x),
\]
where
\[
w_1^{(t)}(a)
=
\left(1-\frac{\alpha_t}{\eta}\right)u_1^{(t)}(a)
+
\alpha_t \widehat f_1^{(t)}(x,a),
\qquad
Z_1^{(t)}(x)
=
\sum_{b\in\mathcal A_x}
\piref(b|x)\exp\!\big(w_1^{(t)}(b)\big).
\]
Since $\widehat g\in\mathcal G\subset[-1,1]$, we have
$\operatorname{osc}(\widehat f_1^{(t)}(x,\cdot))\le 2$.
Using $\alpha_t\le\eta$ and the induction hypothesis, we obtain
\[
\begin{aligned}
\operatorname{osc}\big(w_1^{(t)}\big)
&\le
\left(1-\frac{\alpha_t}{\eta}\right)
\operatorname{osc}\big(u_1^{(t)}\big)
+
\alpha_t
\operatorname{osc}\big(\widehat f_1^{(t)}(x,\cdot)\big) \\
&\le
\left(1-\frac{\alpha_t}{\eta}\right)2\eta
+
2\alpha_t
=
2\eta .
\end{aligned}
\]
Since subtracting a constant does not change oscillation,
\[
\operatorname{osc}\big(u_1^{(t+1)}\big)
=
\operatorname{osc}\big(w_1^{(t)}\big)
\le 2\eta .
\]
Moreover, because $\piref(\cdot|x)$ is a probability distribution on $\mathcal A_x$,
\[
\min_{a\in\mathcal A_x} w_1^{(t)}(a)
\le
\log Z_1^{(t)}(x)
\le
\max_{a\in\mathcal A_x} w_1^{(t)}(a).
\]
Hence,
\[
\left|u_1^{(t+1)}(a)\right|
\le
\operatorname{osc}\big(w_1^{(t)}\big)
\le
2\eta ,
\qquad
a\in\mathcal A_x .
\]

The argument for Player~2 is identical, with
$-\widehat f_2^{(t)}(x,\cdot)$ replacing
$\widehat f_1^{(t)}(x,\cdot)$.
Since the context $x$ was arbitrary, the induction establishes the following oscillation estimate for all iterates:
\begin{equation}
\label{eq:selfplay_log_ratio_osc}
\operatorname{osc}_{a\in\mathcal A_x}
\left(
\log\frac{\pi_i^{(t)}(a|x)}{\piref(a|x)}
\right)
\le 2\eta,
\qquad
i\in\{1,2\},\ t\ge 0 .
\end{equation}
The same argument also gives the stated absolute bound,
\[
\sup_x\sup_{a\in\supp(\piref(\cdot|x))}
\left|
\log\frac{\pi_i^{(t)}(a|x)}{\piref(a|x)}
\right|
\le 2\eta ,
\qquad i\in\{1,2\},\ t\ge 0 .
\]

It remains to show the same bound for the empirical Nash equilibrium
$\widehat\pi$.
For each context $x$, the KL-regularized best-response equations give
\[
\widehat\pi_1(a|x)
\propto
\piref(a|x)
\exp\!\left(
\eta\,
\mathbb E_{a_2\sim\widehat\pi_2(\cdot|x)}
[
\widehat g(x,a,a_2)
]
\right),
\]
and
\[
\widehat\pi_2(a|x)
\propto
\piref(a|x)
\exp\!\left(
-\eta\,
\mathbb E_{a_1\sim\widehat\pi_1(\cdot|x)}
[
\widehat g(x,a_1,a)
]
\right).
\]
In both cases, the unnormalized logit vector has oscillation at most
$2\eta$ because $\widehat g\in[-1,1]$.
The same normalization argument therefore implies
\[
\sup_x\sup_{a\in\supp(\piref(\cdot|x))}
\left|
\log\frac{\widehat\pi_i(a|x)}{\piref(a|x)}
\right|
\le 2\eta,
\qquad i\in\{1,2\}.
\]
This completes the proof.
\end{proof}
\subsection{Proof of Lemma~\ref{lem:gap_lipschitz}}

\begin{proof}
For a Player~2 policy $\pi_2$, define
$V_1(\pi_2):=J(g^\star,\dagger,\pi_2)$, and for a Player~1 policy $\pi_1$, define
$V_2(\pi_1):=J(g^\star,\pi_1,\dagger)$.
Then $\mathrm{DualGap}(\pi)=V_1(\pi_2)-V_2(\pi_1)$.
It suffices to show that $V_1$ is Lipschitz in $\pi_2$ and $V_2$ is Lipschitz in $\pi_1$ over the bounded log-density-ratio class.

We first consider $V_1$.
Since $V_1(\pi_2)=\max_{\widetilde\pi_1\in\Pi}J(g^\star,\widetilde\pi_1,\pi_2)$, for any two Player~2 policies $\pi_2$ and $\pi_2'$ satisfying the bounded log-density-ratio condition,
\[
|V_1(\pi_2)-V_1(\pi_2')|
\le
\sup_{\widetilde\pi_1\in\Pi}
|J(g^\star,\widetilde\pi_1,\pi_2)-J(g^\star,\widetilde\pi_1,\pi_2')|.
\]
Fix any $\widetilde\pi_1\in\Pi$ and any context $x$, since $g^\star\in[-1,1]$, we have
\[
\begin{aligned}
&\left|
\E_{a_1\sim\widetilde\pi_1(\cdot|x),\,a_2\sim\pi_2(\cdot|x)}
[g^\star(x,a_1,a_2)]
-
\E_{a_1\sim\widetilde\pi_1(\cdot|x),\,a_2\sim\pi_2'(\cdot|x)}
[g^\star(x,a_1,a_2)]
\right|\le
\|\pi_2(\cdot|x)-\pi_2'(\cdot|x)\|_1,
\end{aligned}
\]

It remains to bound the KL regularization term.
For fixed $x$, define
$\pi_{2,\lambda}(\cdot|x):=(1-\lambda)\pi_2(\cdot|x)+\lambda\pi_2'(\cdot|x)$ for $\lambda\in[0,1]$.
Since both $\pi_2$ and $\pi_2'$ satisfy the bounded log-density-ratio condition, so does $\pi_{2,\lambda}$. For every $a\in\mathcal A_x$,
\[
\frac{\pi_{2,\lambda}(a|x)}{\piref(a|x)}
=
(1-\lambda)\frac{\pi_2(a|x)}{\piref(a|x)}
+
\lambda\frac{\pi_2'(a|x)}{\piref(a|x)}
\in [e^{-2\eta},e^{2\eta}].
\]
Therefore, by the fundamental theorem of calculus,
\[
\begin{aligned}
&
\KL(\pi_2(\cdot|x)\|\piref(\cdot|x))
-
\KL(\pi_2'(\cdot|x)\|\piref(\cdot|x))
\\
&\quad =
\int_0^1
\sum_{a\in\mathcal A_x}
\big(\pi_2(a|x)-\pi_2'(a|x)\big)
\left(
\log\frac{\pi_{2,\lambda}(a|x)}{\piref(a|x)}+1
\right)
d\lambda .
\end{aligned}
\]
The constant term vanishes since
$\sum_{a\in\mathcal A_x}(\pi_2(a|x)-\pi_2'(a|x))=0$.
Using $\left|\log(\pi_{2,\lambda}(a|x)/\piref(a|x))\right|\le 2\eta$, we get
\[
\left|
\KL(\pi_2(\cdot|x)\|\piref(\cdot|x))
-
\KL(\pi_2'(\cdot|x)\|\piref(\cdot|x))
\right|
\le
2\eta\,
\|\pi_2(\cdot|x)-\pi_2'(\cdot|x)\|_1 .
\]
After multiplying by the regularization coefficient $\eta^{-1}$, the KL part contributes at most
$2\|\pi_2(\cdot|x)-\pi_2'(\cdot|x)\|_1$.
Combining the payoff and KL parts and taking expectation over $x\sim\rho$ gives
\[
|V_1(\pi_2)-V_1(\pi_2')|
\le
3\,\E_{x\sim\rho}
\|\pi_2(\cdot|x)-\pi_2'(\cdot|x)\|_1 .
\]
The argument for $V_2$ is symmetric and we obtain
\[
|V_2(\pi_1)-V_2(\pi_1')|
\le
3\,\E_{x\sim\rho}
\|\pi_1(\cdot|x)-\pi_1'(\cdot|x)\|_1 .
\]
Finally,
\[
\begin{aligned}
\left|
\mathrm{DualGap}(\pi)
-
\mathrm{DualGap}(\pi')
\right|
&=
\left|
V_1(\pi_2)-V_2(\pi_1)
-
V_1(\pi_2')+V_2(\pi_1')
\right|
\\
&\le
|V_1(\pi_2)-V_1(\pi_2')|
+
|V_2(\pi_1)-V_2(\pi_1')|
\\
&\le
3\,\mathbb E_{x\sim\rho}
\left[
\|\pi_1(\cdot|x)-\pi_1'(\cdot|x)\|_1
+
\|\pi_2(\cdot|x)-\pi_2'(\cdot|x)\|_1
\right].
\end{aligned}
\]
This completes the proof.
\end{proof}

\subsection{Proof of Lemma~\ref{lem:opt_error}}

We first introduce a helpful lemma for the online mirror descent update in the contextual setting.

\begin{lemma}
\label{lem:omd_lemma_contextual}
Fix a context $x$.
Let $\phi(\pi)=\sum_a \pi(a)\log\pi(a)$ be the negative entropy, and let
$\delta(x,\cdot)\in\mathbb R^{|\mathcal A_x|}$.
For any $\pi^-(\cdot|x)$, define
\[
\pi^+(\cdot|x)
=
\arg\max_{\pi\in\Delta(\mathcal A_x)}
\left[
\sum_a \pi(a)\delta(x,a)
-
\KL(\pi\|\pi^-(\cdot|x))
\right].
\]
Then for any $\pi\in\Delta(\mathcal A_x)$,
\[
\KL(\pi\|\pi^+(\cdot|x))
\le
\KL(\pi\|\pi^-(\cdot|x))
+
\sum_a(\pi^-(a|x)-\pi(a))\delta(x,a)
+
\frac{1}{8}\operatorname{osc}(\delta(x,\cdot))^2 .
\]
\end{lemma}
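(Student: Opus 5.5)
The proof is the standard one-step mirror-descent (three-point) identity, followed by a Hoeffding-lemma bound on the log-partition term. Fix $x$ and drop it from notation.

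\textbf{Closed form of the update.} The maximizer of $\sum_a \pi(a)\delta(a)-\KL(\pi\|\pi^-)$ over $\Delta(\mathcal A_x)$ is the exponential tilt
\[
\pi^+(a)=\frac{\pi^-(a)\exp(\delta(a))}{Z},\qquad Z=\sum_b \pi^-(b)\exp(\delta(b)).
\]
This follows from the Gibbs variational principle, or from the first-order conditions exactly as in the proof of Lemma~\ref{lem:kl_subopt}. Hence $\log\pi^+(a)=\log\pi^-(a)+\delta(a)-\log Z$.

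\textbf{Exact identity.} For any $\pi\in\Delta(\mathcal A_x)$, substituting the log-ratio gives
\[
\KL(\pi\|\pi^+)=\KL(\pi\|\pi^-)-\sum_a\pi(a)\delta(a)+\log Z .
\]
If $\pi$ puts mass outside $\supp(\pi^-)$, the right-hand side of the claimed inequality is $+\infty$ and there is nothing to prove. Otherwise both divergences are finite.

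\textbf{Bounding $\log Z$.} Write $\log Z=\log\E_{a\sim\pi^-}[e^{\delta(a)}]$. The random variable $\delta(a)$ with $a\sim\pi^-$ takes values in an interval of length $\operatorname{osc}(\delta)$. Hoeffding's lemma therefore gives
\[
\log Z\le \E_{a\sim\pi^-}[\delta(a)]+\tfrac18\operatorname{osc}(\delta)^2=\sum_a\pi^-(a)\delta(a)+\tfrac18\operatorname{osc}(\delta)^2 .
\]
One could instead reuse the Hessian computation in the proof of Lemma~\ref{lem:kl_logit}, but that yields only the weaker constant $\tfrac12\|\delta\|_\infty^2$. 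Getting the sharp $\tfrac18$ with oscillation requires the variance of a bounded variable, $\mathrm{Var}\le \operatorname{osc}^2/4$, which is exactly Hoeffding's lemma.

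\textbf{Conclusion.} Substituting the bound on $\log Z$ into the identity gives
\[
\KL(\pi\|\pi^+)\le\KL(\pi\|\pi^-)+\sum_a(\pi^-(a)-\pi(a))\delta(a)+\tfrac18\operatorname{osc}(\delta)^2 ,
\]
which is the claim.

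The only delicate point is the support convention. Oscillation is taken over $\mathcal A_x$, while the Hoeffding argument needs the range only on $\supp(\pi^-)$, which is a subset of $\mathcal A_x$. Since oscillation over a subset is no larger, the bound holds as stated.
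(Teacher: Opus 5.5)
Your proposal is correct and matches the paper's proof step for step: the exponential-tilt closed form of $\pi^+$, the exact identity $\KL(\pi\|\pi^+)-\KL(\pi\|\pi^-)=-\sum_a\pi(a)\delta(a)+\log\sum_a\pi^-(a)e^{\delta(a)}$, and Hoeffding's lemma applied to $\delta(a)$ with $a\sim\pi^-$. Your support remarks are correct refinements that the paper leaves implicit: the right-hand side is $+\infty$ when $\pi$ charges actions outside $\supp(\pi^-)$, and oscillation over $\supp(\pi^-)$ is at most oscillation over $\mathcal A_x$.
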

\begin{proof}
For notational simplicity, write $\delta(a)=\delta(x,a)$ and
$\pi^-(a)=\pi^-(a|x)$.
The closed-form solution of the update is
\[
\pi^+(a)
=
\frac{\pi^-(a)\exp(\delta(a))}
{\sum_{b\in\mathcal A_x}\pi^-(b)\exp(\delta(b))}.
\]
Therefore,
\[
\begin{aligned}
\KL(\pi\|\pi^+)-\KL(\pi\|\pi^-)
&=
\sum_{a\in\mathcal A_x}\pi(a)
\log\frac{\pi^-(a)}{\pi^+(a)}  \\
&=
-\sum_{a\in\mathcal A_x}\pi(a)\delta(a)
+
\log\sum_{a\in\mathcal A_x}\pi^-(a)\exp(\delta(a)).
\end{aligned}
\]
By Hoeffding's lemma applied to the bounded random variable
$\delta(a)$ with $a\sim\pi^-$,
\[
\log\sum_{a\in\mathcal A_x}\pi^-(a)\exp(\delta(a))
\le
\sum_{a\in\mathcal A_x}\pi^-(a)\delta(a)
+
\frac18\,\operatorname{osc}(\delta)^2.
\]
Combining the two equations gives
\[
\KL(\pi\|\pi^+)-\KL(\pi\|\pi^-)
\le
\sum_{a\in\mathcal A_x}(\pi^-(a)-\pi(a))\delta(a)
+
\frac18\,\operatorname{osc}(\delta)^2,
\]
which proves the claim.
\end{proof}
We now proceed to the proof of the optimization error bound.
\begin{proof}[Proof of Lemma~\ref{lem:opt_error}]
The proof proceeds by analyzing the iterates pointwise for each context $x$ and then taking the expectation.

\paragraph{1. Pointwise Analysis.}
Fix an arbitrary context $x$, and let $\mathcal A_x:=\supp(\piref(\cdot|x))$.
Define
\[
V_t(x)
=
\KL(\widehat{\pi}_1(\cdot|x)\|\pi_1^{(t)}(\cdot|x))
+
\KL(\widehat{\pi}_2(\cdot|x)\|\pi_2^{(t)}(\cdot|x)).
\]
We verify that the update rules in Algorithm~\ref{alg:self_play} correspond to the OMD update in Lemma~\ref{lem:omd_lemma_contextual} with the following directions:
\begin{align*}
    \delta_{1,t}(x,a)
    &=
    \alpha_t \widehat{f}_1^{(t)}(x,a)
    -
    \alpha_t \eta^{-1}
    \log \frac{\pi_1^{(t)}(a|x)}{\piref(a|x)}, \\
    \delta_{2,t}(x,a)
    &=
    -\alpha_t \widehat{f}_2^{(t)}(x,a)
    -
    \alpha_t \eta^{-1}
    \log \frac{\pi_2^{(t)}(a|x)}{\piref(a|x)} .
\end{align*}
Applying Lemma~\ref{lem:omd_lemma_contextual} with these directions to each player's update and summing the inequalities, we obtain
\begin{align}\label{eq:omd_1}
    V_{t+1}(x)
    \le
    V_t(x)
    +
    \underbrace{
    \sum_{i=1}^2
    \sum_{a\in\mathcal A_x}
    \big(\pi_i^{(t)}(a|x)-\widehat{\pi}_i(a|x)\big)
    \delta_{i,t}(x,a)
    }_{\mathrm{(I)}}
    +
    \underbrace{
    \frac18\operatorname{osc}(\delta_{1,t}(x,\cdot))^2
    +
    \frac18\operatorname{osc}(\delta_{2,t}(x,\cdot))^2
    }_{\mathrm{(II)}} .
\end{align}
For term $\mathrm{(II)}$, the boundedness of $\widehat g$ gives
$\operatorname{osc}(\widehat f_i^{(t)}(x,\cdot))\le 2$.
Moreover, Eq.~\eqref{eq:selfplay_log_ratio_osc} from the proof of Lemma~\ref{lem:log_linear_bounded} gives
\[
\operatorname{osc}\!\left(
\log\frac{\pi_i^{(t)}(\cdot|x)}{\piref(\cdot|x)}
\right)
\le 2\eta .
\]
Therefore,
\[
\operatorname{osc}(\delta_{i,t}(x,\cdot))
\le
\alpha_t\operatorname{osc}(\widehat f_i^{(t)}(x,\cdot))
+
\alpha_t\eta^{-1}
\operatorname{osc}\!\left(
\log\frac{\pi_i^{(t)}(\cdot|x)}{\piref(\cdot|x)}
\right)
\le
4\alpha_t .
\]
Thus, $\mathrm{(II)}\le 4\alpha_t^2$.

We now focus on analyzing term $\mathrm{(I)}$. Substituting the expressions for $\delta_{i,t}$ into Eq.~\eqref{eq:omd_1}, term $\mathrm{(I)}$ splits into two parts:
\begin{align*}
    \mathrm{(I)}
    &=
    \alpha_t
    \left[
    \langle \pi_1^{(t)}-\widehat{\pi}_1,\widehat f_1^{(t)}\rangle
    -
    \langle \pi_2^{(t)}-\widehat{\pi}_2,\widehat f_2^{(t)}\rangle
    \right] \\
    &\quad
    -
    \alpha_t\eta^{-1}
    \sum_{i=1}^2
    \left\langle
    \pi_i^{(t)}-\widehat{\pi}_i,
    \log\frac{\pi_i^{(t)}}{\piref}
    \right\rangle .
\end{align*}
Using the identity
\[
-
\left\langle
\pi_i^{(t)}-\widehat\pi_i,
\log\frac{\pi_i^{(t)}}{\piref}
\right\rangle
=
-\KL(\pi_i^{(t)}\|\piref)
+
\KL(\widehat\pi_i\|\piref)
-
\KL(\widehat\pi_i\|\pi_i^{(t)}),
\]
we can rewrite term $\mathrm{(I)}$ as
\begin{align*}
\mathrm{(I)}
&=
\alpha_t
\Bigg[
\langle \pi_1^{(t)}-\widehat{\pi}_1,\widehat f_1^{(t)}\rangle
-
\eta^{-1}\KL(\pi_1^{(t)}\|\piref)
+
\eta^{-1}\KL(\widehat\pi_1\|\piref)
\\
&\qquad
-
\left(
\langle \pi_2^{(t)}-\widehat{\pi}_2,\widehat f_2^{(t)}\rangle
+
\eta^{-1}\KL(\pi_2^{(t)}\|\piref)
-
\eta^{-1}\KL(\widehat\pi_2\|\piref)
\right)
\Bigg]
\\
&\quad
-
\alpha_t\eta^{-1}
\sum_{i=1}^2
\KL(\widehat\pi_i\|\pi_i^{(t)}).
\end{align*}
We analyze the term in the large brackets. Let $\widehat J_x$ be the regularized objective for the empirical game at context $x$:
\[
\widehat J_x(\pi_1,\pi_2)
:=
\mathbb E_{\substack{a_1\sim\pi_1(\cdot|x)\\ a_2\sim\pi_2(\cdot|x)}}
[\widehat g(x,a_1,a_2)]
-
\eta^{-1}\KL(\pi_1(\cdot|x)\|\piref(\cdot|x))
+
\eta^{-1}\KL(\pi_2(\cdot|x)\|\piref(\cdot|x)).
\]
Recall that
$\widehat f_1^{(t)}(x,a_1)
=
\mathbb E_{a_2\sim\pi_2^{(t)}(\cdot|x)}
[\widehat g(x,a_1,a_2)]$
and
$\widehat f_2^{(t)}(x,a_2)
=
\mathbb E_{a_1\sim\pi_1^{(t)}(\cdot|x)}
[\widehat g(x,a_1,a_2)]$.
First, the inner product terms involving the current policy pair cancel out exactly:
\[
\langle \pi_1^{(t)},\widehat f_1^{(t)}\rangle
=
\mathbb E_{\substack{a_1\sim\pi_1^{(t)}(\cdot|x)\\a_2\sim\pi_2^{(t)}(\cdot|x)}}
[\widehat g(x,a_1,a_2)]
=
\langle \pi_2^{(t)},\widehat f_2^{(t)}\rangle .
\]
Therefore, the large bracket equals $
\widehat J_x(\pi_1^{(t)},\widehat\pi_2)
-
\widehat J_x(\widehat\pi_1,\pi_2^{(t)}).$ Since the policy class contains all context-wise policies supported on $\piref$, the empirical regularized game decomposes over contexts. Hence the empirical Nash equilibrium satisfies the following pointwise saddle property for each context $x$:
\[
\widehat J_x(\pi_1^{(t)},\widehat\pi_2)
\le
\widehat J_x(\widehat\pi_1,\widehat\pi_2)
\le
\widehat J_x(\widehat\pi_1,\pi_2^{(t)}).
\]
Thus, the large bracket is non-positive, and hence
\[
\mathrm{(I)}
\le
-\alpha_t\eta^{-1}
\sum_{i=1}^2
\KL(\widehat\pi_i(\cdot|x)\|\pi_i^{(t)}(\cdot|x))
=
-\alpha_t\eta^{-1}V_t(x).
\]
Combining the bounds on $\mathrm{(I)}$ and $\mathrm{(II)}$ in Eq.~\eqref{eq:omd_1}, we obtain
\[
V_{t+1}(x)
\le
\left(1-\frac{\alpha_t}{\eta}\right)V_t(x)
+
4\alpha_t^2 .
\]

\paragraph{2. Solving the Recursion.}
With $\alpha_t=\frac{2\eta}{t+2}$, the recursion becomes
\[
V_{t+1}(x)
\le
\frac{t}{t+2}V_t(x)
+
\frac{16\eta^2}{(t+2)^2}.
\]
We prove $V_t(x)\le \frac{16\eta^2}{t+1}$ for all $t\ge 1$ by induction.
For the base case, the recursion with $t=0$ gives
\[
V_1(x)
\le
4\eta^2
\le
\frac{16\eta^2}{2}.
\]
Assume $V_t(x)\le \frac{16\eta^2}{t+1}$. Then
\begin{align*}
    V_{t+1}(x)
    &\le
    \frac{t}{t+2}\frac{16\eta^2}{t+1}
    +
    \frac{16\eta^2}{(t+2)^2} \\
    &=
    16\eta^2
    \frac{t^2+3t+1}{(t+1)(t+2)^2}
    \le
    \frac{16\eta^2}{t+2},
\end{align*}
where the last step uses $t^2+3t+1\le (t+1)(t+2)$.

\paragraph{3. Aggregation.}
Taking the expectation over $x\sim\rho$ yields
\[
\mathbb E_{x\sim\rho}[V_T(x)]
\le
\frac{16\eta^2}{T+1}.
\]
Finally, by Pinsker's inequality, for each $x$,
\[
\|\pi_1^{(T)}(\cdot|x)-\widehat\pi_1(\cdot|x)\|_1
+
\|\pi_2^{(T)}(\cdot|x)-\widehat\pi_2(\cdot|x)\|_1
\le
2\sqrt{V_T(x)}.
\]
Taking expectation over $x\sim\rho$ and applying Jensen's inequality gives
\[
\mathbb E_{x\sim\rho}
\left[
\|\pi_1^{(T)}(\cdot|x)-\widehat\pi_1(\cdot|x)\|_1
+
\|\pi_2^{(T)}(\cdot|x)-\widehat\pi_2(\cdot|x)\|_1
\right]
\le
2\sqrt{\mathbb E_{x\sim\rho}[V_T(x)]}
\le
\frac{8\eta}{\sqrt{T+1}}.
\]
This completes the proof.
\end{proof}

\subsection{Proof of Theorem~\ref{thm:self_play}}

\begin{proof}
We combine the local transfer argument, the optimization convergence of self-play, and the statistical guarantee for the empirical equilibrium.
Let $\widehat\pi=(\widehat\pi_1,\widehat\pi_2)$ denote the Nash equilibrium of the empirical KL-regularized game induced by $\widehat g$.

\paragraph{Bounding the Optimization Error.}
Since $\alpha_t=2\eta/(t+2)\le \eta$, Lemma~\ref{lem:log_linear_bounded} implies that both $\pi^{(T)}$ and $\widehat\pi$ satisfy the bounded log-density-ratio condition.
Therefore, by Lemma~\ref{lem:gap_lipschitz},
\begin{align}
\mathrm{DualGap}(\pi^{(T)})
&\le
\mathrm{DualGap}(\widehat\pi)
+
\left|
\mathrm{DualGap}(\pi^{(T)})
-
\mathrm{DualGap}(\widehat\pi)
\right| \nonumber\\
&\le
\mathrm{DualGap}(\widehat\pi)
+
3\,\mathbb E_{x\sim\rho}
\left[
\|\pi_1^{(T)}(\cdot|x)-\widehat\pi_1(\cdot|x)\|_1
+
\|\pi_2^{(T)}(\cdot|x)-\widehat\pi_2(\cdot|x)\|_1
\right].
\label{eq:self_play_transfer}
\end{align}
Lemma~\ref{lem:opt_error} further gives
\begin{align}
\mathbb E_{x\sim\rho}
\left[
\|\pi_1^{(T)}(\cdot|x)-\widehat\pi_1(\cdot|x)\|_1
+
\|\pi_2^{(T)}(\cdot|x)-\widehat\pi_2(\cdot|x)\|_1
\right]
\le
\frac{8\eta}{\sqrt{T+1}}.
\nonumber
\end{align}
Substituting this bound into Eq.~\eqref{eq:self_play_transfer}, the optimization error contributes at most $\mathcal O\!\left(\frac{\eta}{\sqrt{T+1}}\right)$.

\paragraph{Bounding the Statistical Error.}
It remains to control the true duality gap of the empirical equilibrium $\widehat\pi$.
By Theorem~\ref{thm:fast_rate_sc}, applied to the same least-squares estimator $\widehat g$, under Assumptions~\ref{asm:realize} and~\ref{assm:uni_cover}, with probability at least $1-\delta$,
\begin{align}
\mathrm{DualGap}(\widehat\pi)
\le
\mathcal O\!\left(
\frac{(\eta+\eta^3)\,C_{\mathrm{uni}}\log(|\mathcal G|/\delta)}{n}
\right).
\nonumber
\end{align}

\paragraph{Conclusion.}
Combining the optimization and statistical terms, we obtain
\begin{align}
\mathrm{DualGap}(\pi^{(T)})
\le
\mathcal O\!\left(
\frac{\eta}{\sqrt{T+1}}
+
\frac{(\eta+\eta^3)\,C_{\mathrm{uni}}\log(|\mathcal G|/\delta)}{n}
\right).
\nonumber
\end{align}
This completes the proof.
\end{proof}

\end{document}